\documentclass[a4paper,UKenglish,cleveref, autoref, thm-restate]{lipics-v2021}



\usepackage{amsmath,amssymb,amsfonts}
\usepackage{algorithmic}
\usepackage{graphicx}
\usepackage{textcomp}
\usepackage[dvipsnames]{xcolor}
\usepackage{hyperref}
\usepackage{amsthm}
\usepackage{stmaryrd}
\usepackage{upgreek}
\usepackage{quiver}
\usepackage{adjustbox}

\bibliographystyle{plainurl}

\title{The Groupoid-Syntax of Type Theory is a Set}


\author{Thorsten Altenkirch}{University of Nottingham, United Kingdom}{txa@cs.nott.ac.uk}{https://orcid.org/0000-0002-6582-5025}{}
\author{Ambrus Kaposi}{Eötvös Loránd University (ELTE), Budapest, Hungary}{akaposi@inf.elte.hu}{https://orcid.org/0000-0001-9897-8936}{}
\author{Szumi Xie}{Eötvös Loránd University (ELTE), Budapest, Hungary}{szumi@inf.elte.hu}{https://orcid.org/0009-0001-1355-1114}{}

\authorrunning{T.\ Altenkirch, A.\ Kaposi, and Sz.\ Xie} 

\Copyright{Thorsten Altenkirch, Ambrus Kaposi, and Szumi Xie} 

\ccsdesc[500]{Theory of computation~Type theory}

\keywords{Categorical models of type theory, category with families, groupoids, coherence, homotopy type theory} 

\category{} 

\relatedversion{} 
\relatedversiondetails{Preprint}{https://arxiv.org/abs/2509.14988}

\supplementdetails[subcategory={Source Code}]{Software}{https://bitbucket.org/akaposi/cohtt}

\funding{The first author was supported by project no. TKP2021-NVA-29
  which has been implemented with the support provided by the Ministry
  of Culture and Innovation of Hungary from the National Research,
  Development and Innovation Fund, ﬁnanced under the TKP2021-NVA
  funding scheme. The second and third authors were funded by the
  European Union (ERC, HOTT, 101170308). Views and opinions expressed
  are however those of the author(s) only and do not necessarily
  reflect those of the European Union or the European Research
  Council. Neither the European Union nor the granting authority can
  be held responsible for them.}

\acknowledgements{We thank the reviewers for their helpful comments and
  suggestions. We thank Niels van der Weide for explaining us the relationship
  to comprehension categories.}

\nolinenumbers 

\EventEditors{Stefano Guerrini and Barbara K\"{o}nig}
\EventNoEds{2}
\EventLongTitle{34th EACSL Annual Conference on Computer Science Logic (CSL 2026)}
\EventShortTitle{CSL 2026}
\EventAcronym{CSL}
\EventYear{2026}
\EventDate{February 23--28, 2026}
\EventLocation{Paris, France}
\EventLogo{}
\SeriesVolume{363}
\ArticleNo{19}

\theoremstyle{plain} 
\newtheorem{thm}{Theorem}
\newtheorem{lemm}[thm]{Lemma}
\newtheorem{prop}[thm]{Proposition}
\newtheorem{problem}[thm]{Problem}

\theoremstyle{definition}
\newtheorem{defn}[thm]{Definition}
\newtheorem{examp}[thm]{Example}
\newtheorem{construction}[thm]{Construction}
\newtheorem{notation}[thm]{Notation}
\newtheorem{parameter}[thm]{Parameter}

\newcommand{\blank}{\mathord{\hspace{1pt}\text{--}\hspace{1pt}}} 
\NewCommandCopy{\oldGamma}{\Gamma}\renewcommand{\Gamma}{\mathit{\oldGamma}}
\NewCommandCopy{\oldDelta}{\Delta}\renewcommand{\Delta}{\mathit{\oldDelta}}
\NewCommandCopy{\oldTheta}{\Theta}\renewcommand{\Theta}{\mathit{\oldTheta}}
\renewcommand{\alpha}{\upalpha}
\renewcommand{\beta}{\upbeta}
\renewcommand{\eta}{\upeta}
\renewcommand{\epsilon}{\upvarepsilon}
\newcommand{\Type}{\mathsf{Type}}
\newcommand{\Set}{\mathsf{Set}}
\newcommand{\Con}{\mathsf{Con}}
\newcommand{\Sub}{\mathsf{Sub}}
\newcommand{\id}{\mathsf{id}}
\newcommand{\Ty}{\mathsf{Ty}}
\newcommand{\Tm}{\mathsf{Tm}}
\newcommand{\ext}{\triangleright}
\newcommand{\p}{\mathsf{p}}
\newcommand{\q}{\mathsf{q}}
\newcommand{\U}{\mathsf{U}}
\newcommand{\El}{\mathsf{El}}
\newcommand{\app}{\mathsf{app}}
\newcommand{\lam}{\mathsf{lam}}
\newcommand{\refl}{\mathsf{refl}}

\renewcommand{\S}{\mathsf{S}}
\newcommand{\G}{\mathsf{G}}
\newcommand{\NTy}{\mathsf{NTy}}
\newcommand{\N}[1]{\mathcolor{BrickRed}{#1}}
\newcommand{\Cover}{\mathsf{Cover}}
\newcommand{\decode}{\mathsf{decode}}
\newcommand{\norm}{\mathsf{norm}}
\newcommand{\comp}{\mathsf{compl}}
\newcommand{\inU}{\mathsf{inU}}
\newcommand{\inEl}{\mathsf{inEl}}

\newcommand{\Agda}{%
  \raisebox{-.05em}{\includegraphics{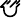}}%
}

\begin{document}

\maketitle

\begin{abstract}
Categories with families (CwFs) have been used to define the semantics of
type theory in type theory. In the setting of Homotopy Type
Theory (HoTT), one of the limitations of the traditional notion of
CwFs is the requirement to set-truncate types, which excludes models
based on univalent categories, such as the standard set model. To
address this limitation, we introduce the concept of a
\textit{Groupoid Category with Families (GCwF)}. This framework
truncates types at the groupoid level and incorporates coherence
equations, providing a natural extension of the CwF framework when
starting from a 1-category.

We demonstrate that the initial GCwF for a type theory with a base
family of sets and $\Pi$-types (groupoid-syntax) is
set-truncated. Consequently, this allows us to utilize the
conventional intrinsic syntax of type theory while enabling
interpretations in semantically richer and more natural models.
All constructions in this paper were formalised in Cubical Agda.
\end{abstract}

\section{Introduction}

In \cite{DBLP:conf/popl/AltenkirchK16}, an \emph{intrinsically typed
  syntax} for basic type theory using a
\emph{Quotient-Inductive-Inductive Type (QIIT)} was introduced. By intrinsically typed, we mean that the syntax directly enforces typing constraints, eliminating the need for separate untyped preterms needed in extrinsic presentations. The equational theory is integrated naturally using \emph{path constructors} from Homotopy Type Theory (HoTT), while set-truncation ensures that types behave as sets.

QIITs are a special case of \emph{Higher Inductive-Inductive Types (HIITs)} where
all types are truncated to sets by adding a higher path
constructor. The term \emph{inductive-inductive} signals that
constructors can reference other constructors in their types. In
essence,  \cite{DBLP:conf/popl/AltenkirchK16}  defined the syntax of type theory as the \emph{initial\footnote{
Initiality is equivalent to induction for any QIIT \cite{DBLP:journals/pacmpl/KaposiKA19}, this been generalised to HIITs by Sattler \cite{christian}. In this paper, we use the terms \emph{initial model} and \emph{syntax} interchangeably.}
Category with Families (CwF)} with $\Pi$-types and an uninterpreted base family. This allowed the syntax to be interpreted in any CwF with the necessary structure and served as the foundation for a proof of normalisation using \emph{Normalisation by Evaluation (NbE)} \cite{altenkirch2017normalisation}.

However, this approach had a significant limitation: the syntax could not be interpreted in the \emph{intended model} where types are sets. This issue arose due to the use of set truncation, which enforced types to be sets but precluded a univalent semantics, such as $\Set$. To work around this, inductive-recursive universes were used. While effective, this approach was unsatisfactory as it excluded univalent models, which are natural semantics for type theory.

Simply omitting set truncation is not a solution. Without truncation: (i) we cannot prove necessary equations in the syntax;
(ii) the syntax itself is no longer a set, which e.g.\ makes equality in the syntax undecidable.
A fully principled solution would require adding all higher coherences. However, this is both technically complex and generally believed to require a \emph{2-level type theory} rather than plain HoTT \cite{DBLP:conf/lics/Kraus21}.


In this paper, we propose a middle ground: we lift the truncation level to \emph{groupoids} and add a minimal set of coherence equations.
This enables interpreting the syntax into the set model and other univalent category-based models.
This compromise aligns naturally with the structure of categories in HoTT \cite{HoTTbook}, where
\emph{objects} are groupoids with no truncation restriction, while
\emph{hom-sets} remain sets, as their name implies.
Actually, if we only restrict types to be groupoids, then we can prove
that contexts in the syntax also form a groupoid.

At first glance, this raises a new concern: does lifting to groupoids and adding coherence equations require redefining the syntax? Do we lose decidability of equality? Our \emph{main result} resolves this concern:

\begin{quote}
\textbf{The groupoid-syntax of type theory with $\Pi$-types and a base family has types and contexts that are sets.}
\end{quote}

In essence, we retain the \emph{set-truncated syntax of type theory}
while enabling evaluation in \emph{groupoid-level models}. This allows
us to interpret the set-truncated syntax into univalent models, such
as $\Set$ or the \emph{container model}
\cite{altenkirch2021container}. However, we note that univalence for
types cannot be assumed as a principle at the judgmental level—doing
so would mean that types are not a set anymore.

\subparagraph*{Contributions.}
The main contributions of this paper are as follows:
\begin{itemize}
    \item We introduce the notion of a \emph{Groupoid Category with Families (GCwF)} with $\Pi$-types and a base family (Definition \ref{def:groupidCwF}).
    \item We show that the initial GCwF with $\Pi$-types and a base family is \emph{set-truncated}.
    \item We establish the above proof using an \emph{$\alpha$-normalisation} construction.
    \item As a result, we enable the definition of the univalent \emph{set model} and other univalent category-based models for the set-truncated syntax.
\end{itemize}
All results are formalised in \emph{Cubical Agda}.

\subparagraph*{Structure of the paper.}
After listing related work, we describe our metatheory and notation in
Section \ref{sec:meta}. In Section \ref{sec:syntax}, we define
various syntaxes as HIITs and describe the problem of interpreting the
set-truncated syntax in sets. In Section \ref{sec:alpha} we show that
the groupoid-syntax is a set. We use this fact in Section
\ref{sec:fruits} to fix the above problem. We conclude in Section
\ref{sec:conclusion}.

\subparagraph*{Related work.}
This paper is a continuation of the series of papers internalising the
intrinsic syntax of type theory in type theory
\cite{DBLP:conf/types/Danielsson06,DBLP:journals/entcs/Chapman09} and
in homotopy type theory
\cite{mike,DBLP:conf/popl/AltenkirchK16}. Intrinsic syntax means that
there are only well-formed, well-scoped, well-typed terms which are
quotiented by conversion. This is in contrast with extrinsic style
formalisations
\cite{DBLP:journals/pacmpl/0001OV18,DBLP:conf/cpp/AdjedjLMPP24}. We
use a variant of Dybjer's CwFs \cite{DBLP:conf/types/Dybjer95}
introduced by Ehrhard \cite{ehrhard,coquandEhrhard}.

Infinite-dimensional versions of our 1-dimensional notion of model are
given by Kraus and Uemura. Kraus defines a notion of $\infty$-CwF
\cite{DBLP:conf/lics/Kraus21} inside an extension of type theory with
a strict equality (two-level type theory,
\cite{DBLP:conf/csl/AltenkirchCK16,DBLP:journals/mscs/AnnenkovCKS23,DBLP:journals/mscs/AnnenkovCKS24}). He
conjectures that the set-level (0-level) syntax is initial for his
$\infty$-model. Uemura \cite{DBLP:journals/corr/abs-2212-11764} proves
normalisation for an $\infty$-dimensional presentation of type theory,
however his work is not formalised in intensional type theory.

Our theorem that the initial GCwF with certain type formers is
set-truncated can be seen as a simple coherence theorem analogous to
that of monoidal categories. Coherence for monoidal categories says
that in the free monoidal category over a set of objects, morphisms
form a set. Our coherence theorem is for types rather than morphisms
(substitutions), and we generate the types from a set-valued family
using $\Pi$ and instantiation. Coherence for monoidal groupoids was
proven in HoTT by Piceghello \cite{piceghello:LIPIcs.TYPES.2019.8},
where he also used groupoid-truncated HITs to define the free monoidal
groupoid.

In HoTT, the ideal 
solution for coherence problems is to find finite descriptions which imply
all the infinitely many coherences. For example, usability of integers defined as set-quotients is limited, but there is a way to define
their $\infty$-version without truncation
\cite{DBLP:conf/lics/AltenkirchS20}. Free groups can defined
without truncation \cite{DBLP:conf/lics/KrausA18}, however originally groupoid-truncation was needed
to prove that the free group over a set is a set. The general case was resolved by Wärn \cite{wärn2024pathspacespushouts}.
The Symmetry book \cite{Symmetry} contains several similar examples. 

There are notions of model of type theory weaker than CwFs where
e.g.\ substitution is only functorial up to isomorphism
\cite{DBLP:conf/csl/Hofmann94,DBLP:journals/tocl/LumsdaineW15,DBLP:conf/types/Bocquet21}.
These are further away from implementations of type theory, but they admit the
standard model in the setting of HoTT \cite{DBLP:conf/lics/Weide25}, even
without going 2-categorical. 2-categories are used to formulate the equivalence
of weaker and stricter notions of model.
Dybjer and Clairambault
\cite{DBLP:journals/mscs/ClairambaultD14} proved the 2-equivalence
of locally cartesian closed categories and CwFs with appropriate
structure. Van der Weide \cite{DBLP:conf/lics/Weide25} formalised this
result in a univalent setting, for univalent comprehension
categories instead of CwFs, and extended it to new type formers.

Higher inductive-inductive types (HIITs) have been used before to
describe free algebraic structures such as real numbers
\cite{HoTTbook}, the partiality monad
\cite{DBLP:conf/fossacs/AltenkirchDK17}, or hybrid semantics
\cite{DBLP:conf/fscd/Diezel020}, but all of these are 
set-truncated HIITs, unlike our groupoid-syntax. Cubical type theory
supports HITs
\cite{DBLP:conf/lics/CoquandHM18,DBLP:journals/pacmpl/CavalloH19}, and
there is a scheme for describing HIITs
\cite{DBLP:journals/lmcs/KaposiK19} which covers our usages.

\section{Metatheory and formalisation}\label{sec:meta}

Everything in this paper is formalised in Cubical Agda
\cite{DBLP:journals/jfp/VezzosiMA21}, the formalisation is available
as supplementary material. Next to definitions/theorems/etc.,
\href{https://csl26-cohtt.github.io/TT.README.html}{\Agda} icons
point to the corresponding part in the HTML version of the source
code. In the paper text, we use informal cubical type theory: this
means that we do not refer to the interval and instead of using
3-dimensional cubes, we only compose and fill larger 2-dimensional
shapes.

We write $\equiv$ for equations holding definitionally, $:\equiv$
denotes definitions. Dependent function space is written $(x:A)\to B$
or $\forall x\ldotp B$, we also use implicit quantifications. We write dependent pairs as $(x:A)\times B$,
the empty type as $\bot$, the unit type as $\top$. The universe of
types is $\Type$, we also use the universe of h-sets $\Set$ and
h-groupoids $\mathsf{Groupoid}$. We have a predicative universe hierarchy, but we do not write levels for readability. The identity (path, equality) type is
written $a =_A b$ for $a, b : A$, where the subscript $_A$ is usually
ommitted. The dependent path type (PathP, heterogeneous equality) is
written $b_0 =_B^e b_1$ for $e : a_0 = a_1$ and $b_i : B\,a_i$,
sometimes the subscript $_B$ is ommitted. We overload functions and their congruence
(ap operator), e.g.\ $f\,e : f\,a = f\,b$ where $e : a = b$, and we
omit symmetries as well. Transport is written $e_*\,b_0 : B\,a_1$ for
$e : a_0 = a_1$ and $b_0 : B\,a_0$, we tend to give a separate name
for operations using transport (e.g.\ $\blank[\blank]^\U$ is a
transported version of $\blank[\blank]$). The obvious element of the heterogeneous equality
$b_0 =_B^e\,e_*\,b_0$ is called $\mathrm{transportFiller}$. The
composition operator of cubical type theory is the generalisation of
transitivity as depicted below, it also comes with a filler operation.
\[\begin{tikzcd}
	& {a_1} & \cdots & {a_{n-1}} \\
	{a_0} &&&& {a_n}
	\arrow["{e_2}", from=1-2, to=1-3]
	\arrow["{e_{n-1}}", from=1-3, to=1-4]
	\arrow["{e_n}", curve={height=-12pt}, from=1-4, to=2-5]
	\arrow["{e_1}", curve={height=-12pt}, from=2-1, to=1-2]
	\arrow[""{name=0, anchor=center, inner sep=0}, "{\text{composition}}"', dotted, from=2-1, to=2-5]
	\arrow["{\text{filler}}"{description}, draw=none, from=1-3, to=0]
\end{tikzcd}\]
For the composite equality $e$, we denote the filler by $\mathrm{fillerOf}\,e$. Some of these composition and filling
operations are primitive in cubical type theory, but they are also
definable via the eliminator of the identity type (J). In this paper
we abstract over these differences.

We write compositions with equational reasoning by $a_0
\overset{e_1}{=} a_1 \overset{e_2}{=} \dots \overset{e_n}{=} a_n$, or
its multi-line variant (left, below). Composition also works for
heterogeneous equalities, in this case we write the base equalities in
superscripts (right, below).
\begin{alignat*}{10}
  & a_0 && {=}(e_1)     &&  b_0 && {=}^{e_1}(e'_1) \\      
  & \dots               && && \dots \\                       
  & a_{n-1} && {=}(e_n) \hspace{9em} &&  b_{n-1} && {=}^{e_n}(e'_n) \\  
  & a_n                && &&   b_n                            
\end{alignat*}
Here $b_i : B\,a_i$, $e_i : a_{i-1} = a_i$ and $e'_i : b_{i-1}
=_B^{e_i} b_i$, and the resulting heterogeneous equality is $b_0
=_B^{\text{composite of the $e_i$s}} b_n$. We denote
heterogeneous composition and filling of shapes by drawing a base
diagram and a dependent diagram. We say that the right diagram is
\emph{over} the left one: in this case the dotted composition line has type $b_0 =_B^{e_0}\,b_n$.
\adjustbox{minipage={1.1\textwidth},scale=0.9,center}{
\[
\begin{tikzcd}[ampersand replacement=\&]
	\& {a_1} \& \cdots \& {a_{n-1}} \&\&\& {b_1} \& \cdots \& {b_{n-1}} \\
	{a_0} \&\&\&\& {a_n} \& {b_0} \&\&\&\& {b_n}
	\arrow["{e_2}", from=1-2, to=1-3]
	\arrow["{e_{n-1}}", from=1-3, to=1-4]
	\arrow["{e_n}", curve={height=-12pt}, from=1-4, to=2-5]
	\arrow["{e'_2}", from=1-7, to=1-8]
	\arrow["{e'_{n-1}}", from=1-8, to=1-9]
	\arrow["{e'_n}", curve={height=-12pt}, from=1-9, to=2-10]
	\arrow["{e_1}", curve={height=-12pt}, from=2-1, to=1-2]
	\arrow[""{name=0, anchor=center, inner sep=0}, "{e_0}"', from=2-1, to=2-5]
	\arrow["{e'_1}", curve={height=-12pt}, from=2-6, to=1-7]
	\arrow[""{name=1, anchor=center, inner sep=0}, "{\text{composition}}"', dotted, from=2-6, to=2-10]
	\arrow["s"{description}, draw=none, from=1-3, to=0]
	\arrow["{\text{filler}}"{description}, draw=none, from=1-8, to=1]
\end{tikzcd}
\]
}
Some squares can be filled by the fact that every parameterised path
is natural. We denote these naturality squares by writing
$\mathrm{nat}$ in the center:
\adjustbox{minipage={\textwidth},scale=0.9,left}{
  \[
\begin{tikzcd}[ampersand replacement=\&]
	{f\,x} \& {f\,y} \\
	{g\,x} \& {g\,y}
	\arrow["{f\,e'}", from=1-1, to=1-2]
	\arrow["{e\,x}"', from=1-1, to=2-1]
	\arrow["{\mathrm{nat}}"{description}, draw=none, from=1-1, to=2-2]
	\arrow["{e\,y}", from=1-2, to=2-2]
	\arrow["{g\,e'}"', from=2-1, to=2-2]
\end{tikzcd}
\]
}

There are some technical limitations of Cubical Agda that we have to
circumvent in the formalisation, but are not visible in the text of
this paper. We summerise these below.
\begin{itemize}
\item Interleaved constructors of (higher) inductive-inductive
  datatypes are not allowed in Cubical Agda. For example, this
  fragment of a syntax of a type theory is not allowed:
  \begin{alignat*}{10}
    & \Con && : \Type                    && \blank\ext\blank && : (\Gamma:\Con)\to\Ty\,\Gamma\to\Con \\         
    & \Ty && : \Con\to\Type \hspace{6em} && \Sigma && : (A:\Ty\,\Gamma)\to\Ty\,(\Gamma\ext A)\to\Ty\,\Gamma \\  
    & &&                                 && \mathsf{eq} && : \Gamma\ext\Sigma\,A\,B =_\Con \Gamma\ext A\ext B   
  \end{alignat*}
  Here every later constructor depends on all the previous
  constructors, the order cannot be modified, and first we have a
  $\Con$-constructor, then a $\Ty$-constructor, then another $\Con$-constructor.
  We solve this via the encoding proposed in \cite{email}, which uses
  the same idea as encoding mutual inductive types as an indexed
  family \cite{DBLP:conf/fscd/KaposiR20}: we introduce a sort of codes
  $\mathsf{Code}$ and a family of elements $\mathsf{EL}$, and then
  all constructors are in the same sort:
  \begin{alignat*}{10}
    & \mathsf{Code} && : \Type                   && \blank\ext\blank && : (\Gamma:\mathsf{EL}\,\Con)\to\mathsf{EL}\,(\Ty\,\Gamma)\to\mathsf{EL}\,\Con \\                    
    & \mathsf{EL} && : \mathsf{Code}\to\Type     && \Sigma && : (A:\mathsf{EL}\,(\Ty\,\Gamma))\to\mathsf{EL}\,(\Ty\,(\Gamma\ext A))\to\mathsf{EL}\,(\Ty\,\Gamma) \\         
    & \Con && : \mathsf{Code}                    && \mathsf{eq} && : \Gamma\ext\Sigma\,A\,B =_{\mathsf{EL}\,\Con} \Gamma\ext A\ext B \\
    & \Ty && : \mathsf{EL}\,\Con\to\mathsf{Code} \hspace{3.9em} &&
  \end{alignat*}
  We use the same technique when defining our syntaxes (Definitions
  \ref{def:wild}, \ref{def:set}, \ref{def:groupoid}).
\item When we describe HIITs, we use transport and composition, but in
  the formalisation, we avoid them (we still use composition operators
  in some 2-paths). The reason is twofold: (i) Agda does not see that
  these operations preserve strict positivity; (ii) as the $\beta$
  rule for transport is not definitional, it makes it difficult to
  formalise strict models such as the $\Type$-interpretation. Instead,
  we make sure that all transports appear outermost and then can be
  encoded via dependent paths (a dependent path on $\refl$ computes to
  a nondependent one). When it is not possible to do this, we add
  extra constructors together with equations which singleton contract
  them. For example, in the text of the paper we write the
  substitution law for $\El$ using a transport:
  $(\El\,\hat{A})[\gamma] =
  \El\,\big((\U[]\,\gamma)_*\,(\hat{A}[\gamma])\big)$. In the formalisation,
  we introduce a new constructor $\blank[\blank]^\U :
  \Tm\,\Gamma\,\U\to\Sub\,\Delta\,\Gamma\to\Tm\,\Delta\,\U$ together
  with the contracting equation $\hat{A}[\gamma] =^{\U[]\,\gamma}
  \hat{A}[\gamma]^\U$, and then use this new constructor when describing
  $\El[]$.
\item When characterising the equality of normal types, in the
  formalisation we use the inductively defined Martin-Löf identity
  type instead of the built-in path type (note that they are
  equivalent). This is convenient because J computes definitionally on
  its constructor $\refl$. In the text of the paper we abstract over this.
\end{itemize}

\section{Variants of the syntax and the set interpretation}\label{sec:syntax}

In this section we define three different variants of the syntax of a
type theory with dependent function space and a base family: the wild
syntax, the set-truncated and the groupoid-truncated syntax. We show
that types in the wild syntax do not form a set, so in particular they
cannot have decidable equality. The set-syntax cannot be interpreted
into the set model directly, while the groupoid-syntax can.

\begin{parameter}
  Everything in this section is parameterised by an $X : \Set$ and a $Y : X\to\Set$.
\end{parameter}

\begin{defn}[Wild syntax \href{https://csl26-cohtt.github.io/TT.Wild.Syntax.html}{\Agda}]\label{def:wild}
  We define a higher inductive-inductive type with four sorts. It
  starts with a category with a terminal object. Objects are called
  contexts and morphisms are called substitutions, the terminal object
  is called the empty context. Note that composition
  $\blank\circ\blank$ takes the $\Gamma$, $\Delta$ and $\Theta$
  arguments implicitly, and similarly for all the forthcoming operations
  and equations.
  \begin{alignat*}{10}
    & \Con && : \Type                                                                                                       && \mathsf{idl} && : \forall\gamma\ldotp\id\circ\gamma = \gamma \\          
    & \Sub && : \Con\to\Con\to\Type                                                                                         && \mathsf{idr} && : \forall\gamma\ldotp\gamma\circ\id = \gamma \\          
    & \blank\circ\blank && : \Sub\,\Delta\,\Gamma\to\Sub\,\Theta\,\Delta\to\Sub\,\Theta\,\Gamma \hspace{5em}                && {\diamond} && : \Con \\                                                  
    & \mathsf{ass} && : \forall\gamma\,\delta\,\theta\ldotp \gamma\circ(\delta\circ\theta) = (\gamma\circ\delta)\circ\theta && \epsilon && : \Sub\,\Gamma\,\diamond \\                                  
    & \id && : \Sub\,\Gamma\,\Gamma                                                                                         && {\diamond}\eta && : (\sigma:\Sub\,\Gamma\,{\diamond})\to\sigma = \epsilon
  \end{alignat*}
  Types form a presheaf over the category of contexts and
  substitutions. The action on morphisms is called instantiation, it
  uses a flipped notation because of contravariance.
  \begin{alignat*}{10}
    & \Ty && : \Con\to\Type                                                               && [{\circ}] && : \forall A\,\gamma\,\delta \ldotp A[\gamma\circ\delta] = A[\gamma][\delta] \\
    & \blank[\blank] && : \Ty\,\Gamma\to\Sub\,\Delta\,\Gamma\to\Ty\,\Delta \hspace{8.4em} && [\id] && : \forall A\ldotp A[\id] = A                                                      
  \end{alignat*}
  Terms form a dependent presheaf over types. The instantiation
  operation is overloaded. Note that the functor laws are paths
  dependent over the functor laws for $\Ty$.
  \begin{alignat*}{10}
    & \Tm && : (\Gamma:\Con)\to\Ty\,\Gamma\to\Type                                                                 && [{\circ}] && : \forall a\,\gamma\,\delta\ldotp a[\gamma\circ\delta] =_{\Tm\,\Theta}^{[{\circ}]\,A\,\gamma\,\delta} a[\gamma][\delta] \\
    & \blank[\blank] && : \Tm\,\Gamma\,A\to(\gamma:\Sub\,\Delta\,\Gamma)\to\Tm\,\Delta\,(A[\gamma]) \hspace{1.6em} && [\id] && : \forall a\ldotp a[\id] =_{\Tm\,\Gamma}^{[\id]\,A} a                                                                         
  \end{alignat*}
  In addition to context extension (infix triangle), we have lifting
  of substitutions which is its functorial action on morphisms. The
  functor laws again depend on those for $\Ty$.
  \begin{alignat*}{10}
    & \blank\ext\blank && : (\Gamma:\Con)\to\Ty\,\Gamma\to\Con                                                &&  {\circ}^+ && : \forall\gamma\,\delta\,\ldotp (\gamma\circ\delta)^+ =^{[{\circ}]\,A\,\gamma\,\delta} \gamma^+ \circ \delta^+ \\
    & \blank^+ && : (\gamma:\Sub\,\Delta\,\Gamma)\to\Sub\,(\Delta\ext A[\gamma])\,(\Gamma\ext A) \hspace{1.9em} &&  \id^+ && : \id^+ =^{[\id]\,A} \id                                                                                             
  \end{alignat*}
  We have weakening $\p$ (or first projection), and zero De Bruijn
  index $\q$ (second projection). We explain how to compose either
  with lifted substitutions.
  \begin{alignat*}{10}
    & \p && : \Sub\,(\Gamma\ext A)\,\Gamma \hspace{1.25em}
    & \q && : \Tm\,(\Gamma\ext A)\,(A[\p]) \hspace{1.25em}
    & \p{\circ}^+ && : \forall\gamma\ldotp \p\circ \gamma^+ = \gamma\circ\p \hspace{1.25em}
    & \q[^+] && : \forall\gamma\ldotp\q[\gamma^+] =^e \q
  \end{alignat*}
  The last equation is heterogeneous over the previous one, $e$
  abbreviates the following composite path in $\Ty\,(\Delta\ext A[\gamma])$:
  $
  A[\p][\gamma^+] \overset{[{\circ}]\,A\,\p\,\gamma^+}{=} A[\p\circ\gamma^+] \overset{\p{\circ}^+\,\gamma}{=} A[\gamma\circ\p] \overset{[{\circ}]\,A\,\gamma\,\p}{=} A[\gamma][\p].
  $

  
  So far we have all weakenings and variables, for example De Bruijn
  index $3$ is given by $\q[\p][\p][\p]$. Now we introduce single
  substitutions via $\langle a\rangle$ which instantiates the last
  variable in the context by $a$, and leaves the rest. It commutes
  with any substitution, and we explain how to compose $\p$ and $\q$
  with single substitutions.
  \begin{alignat*}{10}
    & \langle\blank\rangle && : \Tm\,\Gamma\,A\to\Sub\,\Gamma\,(\Gamma\ext A)                                                             && \p{\circ}\langle\rangle && : \forall a\ldotp\p\circ\langle a\rangle = \id \\
    & \langle\rangle{\circ} && : \forall a\,\gamma\ldotp \langle a\rangle\circ\gamma = \gamma^+\circ\langle a[\gamma]\rangle \hspace{6em} && \q[\langle\rangle] && : \forall a\ldotp \q[\langle a\rangle] =^e a 
  \end{alignat*}
  Again, the last equation is heterogeneous over the previous one,
  where $e$ abbreviates the following path in $\Ty\,\Gamma$:
  $
  A[\p][\langle a\rangle] \overset{[{\circ}]\,A\,\p\,\langle a\rangle}{=} A[\p\circ\langle a\rangle] \overset{\p{\circ}\langle\rangle}{=} A[\id] \overset{[\id]\,A}{=} A.
  $

  
  The last equation for the substitution calculus is an $\eta$ law
  explaining that an identity substitution on an extended context is
  given by $\p$ and $\q$.
  \begin{alignat*}{10}
    & {\ext}\eta && : \id = \p^+ \circ \langle\q\rangle
  \end{alignat*}
  We have a base type and a family over it, and elements of
  these coming from the parameters.
  \begin{alignat*}{10}
    & \U && : \Ty\,\Gamma \hspace{1.9em}
    & \El && : \Tm\,\Gamma\,\U\to\Ty\,\Gamma \hspace{1.9em}
    & \inU && : X\to\Tm\,\diamond\,\U \hspace{1.9em}
    & \inEl && : Y\,x \to \Tm\,\diamond\,(\El\,(\inU\,x))
  \end{alignat*}
  The substitution law for $\U$ is easy. To express $\El[]$, we
  introduce notation for the instantiation operation of terms of type
  $\U$, which is just a transported version of ordinary instantiation.
  \begin{alignat*}{10}
    & \U[] && : \forall\gamma\ldotp\U[\gamma] = \U                                                  && \blank[\blank]^\U && : \Tm\,\Gamma\,\U\to\Sub\,\Delta\,\Gamma\to\Tm\,\Delta\,\U \\
    & \El[] && : \forall\gamma\ldotp(\El\,\hat{A})[\gamma] = \El\,(\hat{A}[\gamma]^\U) \hspace{7em} && \hat{A}[\gamma]^\U && :\equiv (\U[]\,\gamma)_*\,\hat{A}[\gamma]
  \end{alignat*}
  We introduce a transport-filler heterogeneous equality for each
  $\hat{A}$ and $\gamma$ that we will make use of later:
  $
  \hat{A}[\gamma]^\U\mathsf{filler} : \hat{A}[\gamma] =^{\U[]\,\gamma} \hat{A}[\gamma]^\U.
  $
  
  Dependent function space with $\beta$, $\eta$ laws is defined by the
  isomorphism $\Tm\,(\Gamma\ext A)\,B\cong\Tm\,\Gamma\,(\Pi\,A\,B)$
  natural in $\Gamma$. It is enough to state naturality in one
  direction.
  \begin{alignat*}{10}
    & \Pi && : (A:\Ty\,\Gamma)\to\Ty\,(\Gamma\ext A)\to\Ty\,\Gamma                                             &&  \Pi\beta && : \forall b\ldotp\app\,(\lam\,b) = b \\                                             
    & \Pi[] && : \forall A\,B\,\gamma\ldotp (\Pi\,A\,B)[\gamma] = \Pi\,(A[\gamma])\,(B[\gamma^+]) \hspace{1.15em} &&  \Pi\eta && : \forall f\ldotp\lam\,(\app\,f) = f \\                                              
    & \lam && : \Tm\,(\Gamma\ext A)\,B\to\Tm\,\Gamma\,(\Pi\,A\,B)                                             &&   \lam[] && : \forall b\,\gamma\ldotp(\lam\,b)[\gamma] =^{\Pi[]\,A\,B\,\gamma} \lam\,(b[\gamma^+]) \\
    & \app && : \Tm\,\Gamma\,(\Pi\,A\,B)\to\Tm\,(\Gamma\ext A)\,B                                             &&
  \end{alignat*}
  This concludes the definition of the wild syntax.
\end{defn}

We defined the substitution calculus in Ehrhard's style
\cite{ehrhard,coquandEhrhard} instead of the more usual category with
families (CwF) \cite{DBLP:conf/types/Dybjer95,Castellan2021}. These
two presentations of the substitution calculus are isomorphic. In the
above syntax, substitution extension $\blank,\blank :
(\gamma:\Sub\,\Delta\,\Gamma)\to\Tm\,\Delta\,(A[\gamma])\to\Sub\,\Delta\,(\Gamma\ext
A)$ is defined as $(\gamma,a) :\equiv \gamma^+\circ\langle
a\rangle$. In the other direction, $\gamma^+ :\equiv
\big(\gamma\circ\p,([{\circ}]\,A\,\gamma\,\p)_*\,\q\big)$ and
$\langle a\rangle :\equiv (\id,([\id]\,A)_*\,a)$.

Although CwFs have one less operation and fewer equations, we chose the Ehrhard
style syntax as there is no need to use
the transport operation when specifying the equations. In CwFs, the
naturality of substitution extension needs a transport in the middle:
$
(\gamma,a)\circ\delta = \big(\gamma\circ\delta, ([{\circ}]\,A\,\gamma\,\delta)_*\,(a[\delta])\big)
$
In our syntax, all the transports are outermost, hence can be encoded
by dependent paths.

\begin{examp}[Using the wild syntax \href{https://csl26-cohtt.github.io/TT.Wild.Examples.html}{\Agda}]\label{ex:using}
  We derive the other direction of naturality for the
  $\Pi$-isomorphism: this is the substitution law for $\app$ called
  $\app[]$.
  \begin{alignat*}{10}
    & (\app\,t)[\gamma^+] && {=}(\Pi\beta\,t) \\
    & \app\,\Big(\lam\,\big((\app\,t)[\gamma^+]\big)\Big) && {=}(\lam[]\,(\app\,t)\,\gamma^+) \\
    & \app\,\bigg((\Pi[]\,A\,B\,\gamma)_*\,\Big(\big(\lam\,(\app\,t)\big)[\gamma]\Big)\bigg)\; && {=}(\Pi\eta\,t) \\
    & \app\,\big((\Pi[]\,A\,B\,\gamma)_*\,(t[\gamma])\big)
  \end{alignat*}
  Nondependent function space is encoded as $A\Rightarrow B :\equiv
  \Pi\,A\,(B[\p])$.
  
  The identity function for the family $\U$, $\El$ is defined as
  \begin{alignat*}{10}
    & \mathsf{ID} : \Tm\,\diamond\,(\Pi\,\U\,(\El\,((\U[]\,\p)_*\,\q)\Rightarrow\El\,((\U[]\,\p)_*\,\q))) \hspace{5.6em}
    & \mathsf{ID} :\equiv \lam\,(\lam\,\q)
  \end{alignat*}
  Note that we had to transport the zero De Bruijn index $\q :
  \Tm\,(\diamond\ext\U)\,(\U[\p])$ so that we can apply $\El$ to it:
  $(\U[]\,\p)_*\,\q : \Tm\,(\diamond\ext\U)\,\U$.

  In the syntax, we have the categorical application operation for
  $\Pi$. Ordinary application is given by 
  $\blank\cdot\blank : \Tm\,\Gamma\,(\Pi\,A\,B)\to(a:\Tm\,\Gamma\,A)\to\Tm\,\Gamma\,(B[\langle a\rangle])$ defined as
  $t \cdot a :\equiv (\app\,t)[\langle a\rangle]$.
  It is easy to prove its $\beta$ law
  $
  (\lam\,t)\cdot a \equiv \app\,(\lam\,t)[\langle a\rangle] \overset{\Pi\beta\,t}{=} t[\langle a\rangle],
  $
  but the $\eta$ law is more involved as it needs several
  transports. We prove it via heterogeneous equality reasoning, where
  the proof of the equality of the types is written in the superscript
  of the equality sign.
  \begin{alignat*}{10}
    & f && {=}(\Pi\eta\,f) \\
    & \lam\,(\app\,f) && {=^{[\id]\,B}}([\id]\,(\app\,f)) \\
    & \lam\,\big((\app\,f)[\id]\big) && {=^{{\ext}\eta}}({\ext}\eta) \\
    & \lam\,\big((\app\,f)[\p^+\circ\langle\q\rangle]\big) && {=^{[\circ]\,B\,\p^+\,\langle\q\rangle}}([{\circ}]\,(\app\,f)\,\p^+\,\langle\q\rangle) \\
    & \lam\,\big((\app\,f)[\p^+][\langle\q\rangle]\big) && {=}(\app[]\,t\,\p) \\
    & \lam\,\Big(\app\,\big((\Pi[]\,A\,B\,p)_*\,(f[\p])\big)[\langle\q\rangle]\Big)\, && {\equiv} \\
    & \lam\,\big((\Pi[]\,A\,B\,p)_*\,(f[\p])\cdot\q\big)\hspace{1.1em}
  \end{alignat*}
  The type of the above heterogeneous equality is
  $
    f =^e_{\Tm\,\Gamma\,(\Pi\,A\,\blank)} \lam\,\big((\Pi[]\,A\,B\,p)_*\,(f[\p])\cdot\q\big),
  $
  where $e$ is the following composite of the three heterogeneous
  steps in the above equality reasoning:
  $
  B \overset{[\id]\,B}{=} B[\id] \overset{{\ext}\eta}{=} B[\p^+\circ\langle\q\rangle] \overset{[\circ]\,B\,\p^+\,\langle\q\rangle}{=} B[\p^+][\langle\q\rangle].
  $
\end{examp}

\begin{problem}[Type interpretation of the wild syntax \href{https://csl26-cohtt.github.io/TT.Wild.TypeInterp.html}{\Agda}]\label{prob:type_interpretation}
  As a sanity check for our wild syntax, we define its type
  (standard, metacircular) interpretation.\footnote{
  Following Voevodsky \cite{voevodsky2015csystemdefineduniversecategory}, we call a proof relevant theorem a problem and its proof a construction.}
\end{problem}
\begin{proof}[Construction]
We define the following four recursive-recursive functions by pattern
matching on the constructors of the higher inductive-inductive type.
\begin{alignat*}{10}
  & \llbracket\blank\rrbracket && : \Con\to\Type                                                                                && \llbracket\blank\rrbracket && : \Ty\,\Gamma\to\llbracket\Gamma\rrbracket\to\Type                                      \\
  & \llbracket\blank\rrbracket && : \Sub\,\Delta\,\Gamma\to\llbracket\Delta\rrbracket\to\llbracket\Gamma\rrbracket \hspace{6em} && \llbracket\blank\rrbracket && : \Tm\,\Gamma\,A\to(\gamma:\llbracket\Gamma\rrbracket)\to\llbracket A\rrbracket\,\gamma
\end{alignat*}
Composition is function composition
($\llbracket\gamma\circ\delta\rrbracket\,\bar{\theta} :\equiv
\llbracket\gamma\rrbracket\,(\llbracket\delta\rrbracket\,\bar{\theta})$),
identity is identity ($\llbracket\id\rrbracket\,\bar{\gamma} :\equiv
\bar{\gamma}$), instantiation is composition ($\llbracket
A[\gamma]\rrbracket\,\bar{\delta} :\equiv \llbracket
A\rrbracket\,(\llbracket\gamma\rrbracket\,\bar{\delta})$), context
extension is dependent sum ($\llbracket\Gamma\ext A\rrbracket :\equiv
(\bar{\gamma}:\llbracket\Gamma\rrbracket)\times \llbracket
A\rrbracket\,\bar{\gamma}$), lifting is
$\llbracket\gamma^+\rrbracket\,(\bar{\delta},\bar{a}) :\equiv
(\llbracket\gamma\rrbracket\,\bar{\delta},\bar{a})$, $\p$ and $\q$ are
first and second projections, single substitution is
$\llbracket\langle a\rangle\rrbracket\,\bar{\gamma} :\equiv
(\bar{\gamma},\llbracket a\rrbracket\,\bar{\gamma})$. Function space
is interpreted by metatheoretic functions
($\llbracket\Pi\,A\,B\rrbracket\,\bar{\gamma}
:\equiv(\bar{a}:\llbracket A\rrbracket)\to\llbracket
B\rrbracket\,(\bar{\gamma},\bar{a})$). $\U$ and $\El$ are interpreted
by $X$ and $Y$, $\inU$ and $\inEl$ simply return their arguments. All
the equations are $\refl$.
\end{proof}
The standard interpretation shows that our theory is consistent, that
is, not all types are inhabited: $\Tm\,\diamond\,\U$ is interpreted
by $\top\to X$ so it is inhabited if and only if $X$ is.

\begin{prop}[\href{https://csl26-cohtt.github.io/TT.Wild.NotSet.html\#\%C2\%ACisSetTy}{\Agda}]
  Types in the wild syntax ($\Ty\,\Gamma$ for a $\Gamma:\Con$) do not form a set.
\end{prop}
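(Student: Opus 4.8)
The plan is to show that some type has a nontrivial loop space, which contradicts being a set. Concretely I would work in the empty context ${\diamond}$ with the type $\U : \Ty\,{\diamond}$, and exhibit two \emph{parallel} paths $p, q : \U[\id] =_{\Ty\,{\diamond}} \U$ that are provably distinct. If $\Ty\,{\diamond}$ were a set, then $\U[\id] = \U$ would be a proposition and $p$ and $q$ would be forced equal, so it suffices to separate them. The natural candidates are $p :\equiv [\id]\,\U$, the functor law for instantiation at $\U$, and $q :\equiv \U[]\,\id$, the substitution law for $\U$ at the identity.

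To separate $p$ and $q$ I would build a small nonstandard model $M$ of the wild syntax in the metatheory and apply the recursor $R$ (the nondependent eliminator, which is an algebra morphism from the syntax into $M$). Since a recursor sends each path constructor to its interpretation in $M$, it suffices to interpret $p$ and $q$ by distinct metatheoretic paths. The model is deliberately degenerate: take $\Con^M :\equiv \top$, $\Sub^M\,\blank\,\blank :\equiv \top$ and $\Tm^M\,\blank\,\blank :\equiv \top$ with all their structure trivial, take $\Ty^M\,\blank :\equiv \Type$ with the trivial presheaf action $A[\gamma]^M :\equiv A$ (so that $[{\circ}]^M$ and $[\id]^M$ are $\refl$), and interpret the base type by a fixed two-element type, $\U^M :\equiv \mathsf{Bool}$. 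Crucially, I set $\U[]^M\,\gamma :\equiv \mathsf{ua}\,\mathsf{swap} : \mathsf{Bool} = \mathsf{Bool}$, using univalence applied to the nonidentity automorphism of $\mathsf{Bool}$. Everything else ($\El^M$, $\Pi^M$, $\lam^M$, $\app^M$, and the parameters $\inU^M$, $\inEl^M$) can be interpreted constantly, and all remaining equations hold by $\refl$ because the category and the terms are contractible and instantiation is the identity.

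The one point that needs care is that no equation of the wild syntax forces $\U[]$ to be trivial, and this is exactly where the absence of truncation and of higher coherences matters. The only constructors mentioning $\U[]$ are its own statement and $\El[]$, and under $M$ both sides of $\El[]^M$ collapse to $\El^M$ applied to the unique term, so $\El[]^M :\equiv \refl$ is consistent with the choice $\U[]^M :\equiv \mathsf{ua}\,\mathsf{swap}$; there is no two-dimensional constructor relating $\U[]$ at composites to $[{\circ}]$, so the datum is unconstrained. Granting the model, the recursor sends $\U$ and $\U[\id]$ to $\mathsf{Bool}$ and computes $\mathsf{ap}\,R\,p = [\id]^M\,\mathsf{Bool} = \refl$ while $\mathsf{ap}\,R\,q = \U[]^M\,\id^M = \mathsf{ua}\,\mathsf{swap}$. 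Since $\mathsf{swap}$ is not the identity and $\mathsf{ua}$ reflects this, $\mathsf{ua}\,\mathsf{swap} \neq \refl$, hence $p \neq q$ and $\Ty\,{\diamond}$ is not a set.

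I expect the main obstacle to be bookkeeping rather than ideas: one must present the full algebra $M$ and discharge every operation and equation of Definition~\ref{def:wild}, checking in particular that the transports appearing in $\El[]$ and in the heterogeneous term equations cause no clash once $\U[]^M$ is nontrivial. The degeneracy of $M$ (trivial contexts, substitutions and terms, identity instantiation) is chosen precisely to make all of these checks immediate, isolating the single nontrivial datum $\U[]^M :\equiv \mathsf{ua}\,\mathsf{swap}$ that witnesses the failure of set-truncation.
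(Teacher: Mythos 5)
Your proposal is correct and follows essentially the same route as the paper: both build a degenerate model trivialising every sort except $\Ty$, send $\U[]$ to a nontrivial loop, and then separate the two parallel paths $[\id]\,\U$ and $\U[]\,\id$ between $\U[\id]$ and $\U$. The only difference is the target of that loop --- the paper interprets $\Ty\,\Gamma$ by the circle $\mathsf{S}^1$ with $\U[]\mapsto\mathsf{loop}$, whereas you interpret $\Ty\,\Gamma$ by the universe with $\U\mapsto\mathsf{Bool}$ and $\U[]\mapsto\mathsf{ua}\,\mathsf{swap}$, which is just the standard witness of the nontriviality of $\mathsf{loop}$ inlined.
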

\begin{proof}
  Every higher inductive type, including our Definition \ref{def:wild}
  can be interpreted into the unit type where all paths are
  interpreted by $\refl$. We use a variant of this where every sort is
  interpreted by $\top$ except $\Ty\,\Gamma$ is interpreted by the
  circle $\mathsf{S}^1$. $\Pi$, $\U$ and $\El$ are constant
  $\mathsf{base}$, $A[\gamma]$ is interpreted by the interpretation of
  $A$. All equations are interpreted by $\refl$, except $\U[]$ which
  is interpreted by $\mathsf{loop}$. The two different proofs of
  $\U[\id] = \U$, namely $[\id]\,\U$ and $\U[]\,\id$ are interpreted
  by $\refl$ and $\mathsf{loop}$, respectively.
\end{proof}
When using the wild syntax, this is a practical problem: it
can happen that we need a term of type $\El\,((\U[]\,\id)_*\,a)$, but
we only have a term of type $\El\,(([\id]\,\U)_*\,a)$ available. From
a broader perspective, Hedberg's theorem \cite[Theorem
  7.2.5]{HoTTbook} implies that we cannot prove normalisation for the
wild syntax.
In principle, there could be a clever way of defining the equations in
the syntax such that there is only one proof for each equation. It is
not known whether this is possible \cite{mike}. Instead,
we make all the equations equal by force.
\begin{defn}[Set-syntax \href{https://csl26-cohtt.github.io/TT.Set.Syntax.html}{\Agda}]\label{def:set}
  The set-based syntax is the wild syntax (Definition \ref{def:wild})
  extended with the following three higher equality constructors. They
  truncate substitutions, types and terms to sets.\vspace{-1.5em}
  \begin{alignat*}{10}
    & && && \mathsf{isSetTy} && : (e\,e' : A_0 =_{\Ty\,\Gamma}A_1)\to e = e' \\ 
    & \mathsf{isSetSub} && : (e\,e' : \gamma_0 =_{\Sub\,\Delta\,\Gamma}\gamma_1)\to e = e' \hspace{3em} && \mathsf{isSetTm} && : (e\,e' : a_0 =_{\Tm\,\Gamma\,A}a_1)\to e = e' 
  \end{alignat*}
\end{defn}
We do not add that contexts form a set as it is provable by
induction on the context (\href{https://csl26-cohtt.github.io/TT.Set.ConPath.html\#isSetCon}{\Agda}).

Now we can hope for normalisation for this syntax, but the standard
interpretation does not work anymore: the interpretation of
$\Ty\,\Gamma$ would be $\llbracket\Gamma\rrbracket\to\Type$, but then
we cannot interpret $\mathsf{isSetTy}$, as $\Type$ does not form a
set. We have to limit ourselves to interpreting $\Ty\,\Gamma$ by
$\llbracket\Gamma\rrbracket\to\mathsf{Prop}$ where $\mathsf{Prop}$ is
defined as $(A:\Type)\times\big((x\,y:A)\to x =
y\big)$. Alternatively, we can interpret $\Ty$ into an
inductive-recursive universe as in \cite[Section
6]{DBLP:conf/popl/AltenkirchK16}, 
but we cannot interpret the set-syntax in a univalent model.
To fix this, we introduce a syntax where substitutions and terms are truncated to
be sets, but types are only groupoid-truncated. To make types
well-behaved, we add coherence laws which are equations between
equations between types. These express that the substitution laws
$\U[]$, $\El[]$ and $\Pi[]$ commute with the functoriality laws
$[\circ]$, $[\id]$. In the diagrams below, the vertical directions are
the substitution laws and the horizontal directions are the
functoriality laws.
\begin{defn}[Groupoid-syntax \href{https://csl26-cohtt.github.io/TT.Groupoid.Syntax.html}{\Agda}]\label{def:groupoid}
  The groupoid-based syntax is the wild syntax (Definition
  \ref{def:wild}) extended with the following higher equality
  constructors. Some of them are drawn as commutative diagrams.\vspace{-1.5em}
  \begin{alignat*}{10}
    & && && \mathsf{isGrpdTy} && : (w\,w' : e =_{A_0 =_{\Ty\,\Gamma} A_1} e')\to w = w' \\
    & \mathsf{isSetSub} && : (e\,e' : \gamma_0 =_{\Sub\,\Delta\,\Gamma}\gamma_1)\to e = e' \hspace{1.6em} && \mathsf{isSetTm} && : (e\,e' : a_0 =_{\Tm\,\Gamma\,A}a_1)\to e = e' \\
    & \U[\id] && : [\id]\,\U = \U[]\,\id &&
    \El[\id] && : \forall\hat{A}\ldotp
    \begin{tikzcd}[ampersand replacement=\&]
	{(\El\,\hat{A})[\id]} \\
	{\El\,(\hat{A}[\id]^\U)} \& {\El\,\hat{A}}
	\arrow["{\El[]\,\hat{A}\,\id}"', from=1-1, to=2-1]
	\arrow["{[\id]\,(\El\,\hat{A})}", from=1-1, to=2-2]
	\arrow["{[\id]^\U\,\hat{A}}"', from=2-1, to=2-2]
    \end{tikzcd}
  \end{alignat*}
  \vspace{-3em}
  \begin{alignat*}{10}
    & \U[{\circ}] : \forall\gamma\,\delta\ldotp && \El[{\circ}] : \forall\hat{A}\,\gamma\,\delta\ldotp \\
    &
    \begin{tikzcd}[ampersand replacement=\&]
      {\U[\gamma\circ\delta]} \& {\U[\gamma][\delta]} \\
      \& {\U[\delta]} \\
      \& \U
      \arrow["{[{\circ}]\,\U\,\gamma\,\delta}", from=1-1, to=1-2]
      \arrow["{\U[]\,(\gamma\circ\delta)}"', from=1-1, to=3-2]
      \arrow["{\U[]\,\gamma}", from=1-2, to=2-2]
      \arrow["{\U[]\,\delta}", from=2-2, to=3-2]
    \end{tikzcd} \hspace{8.8em}
    &&
    \begin{tikzcd}[ampersand replacement=\&]
	{(\El\,\hat{A})[\gamma\circ\delta]} \& {(\El\,\hat{A})[\gamma][\delta]} \\
	\& {(\El\,(\hat{A}[\gamma]^\U))[\delta]} \\
	{\El\,(\hat{A}[\gamma\circ\delta]^\U)} \& {\El\,(\hat{A}[\gamma]^\U[\delta]^\U)}
	\arrow["{{[\circ]\,(\El\,\hat{A})\,\gamma\,\delta}}", from=1-1, to=1-2]
	\arrow["{{\El[]\,\hat{A}\,(\gamma\circ\delta)}}"', from=1-1, to=3-1]
	\arrow["{{\El[]\,\hat{A}\,\gamma}}", from=1-2, to=2-2]
	\arrow["{{\El[]\,(\hat{A}[\gamma]^\U)\,\delta}}", from=2-2, to=3-2]
	\arrow["{{[\circ]^\U\,\hat{A}\,\gamma\,\delta}}", from=3-1, to=3-2]
    \end{tikzcd}
  \end{alignat*}\vspace{-2em}
  \begin{alignat*}{10}
    & \Pi[\circ] : \forall\,A\,B\,\gamma\,\delta\ldotp && \Pi[\id] : \forall A\,B\ldotp \\
    & \vspace{-1em}
    \begin{tikzcd}[ampersand replacement=\&]
	{(\Pi\,A\,B)[\gamma\circ\delta]} \& {(\Pi\,A\,B)[\gamma][\delta]} \\
	\& {(\Pi\,(A[\gamma])\,(B[\gamma^+]))[\delta]} \\
	{\Pi\,(A[\gamma\circ\delta])\,(B[(\gamma\circ\delta)^+])} \& {\Pi\,(A[\gamma][\delta])\,(B[\gamma^+][\delta^+])}
	\arrow["{{[\circ]\,(\Pi\,A\,B)\,\gamma\,\delta}}", from=1-1, to=1-2]
	\arrow["{{\Pi[]\,A\,B\,(\gamma\circ\delta)}}"', from=1-1, to=3-1]
	\arrow["{{\Pi[]\,A\,B\,\gamma}}"', from=1-2, to=2-2]
	\arrow["{{\Pi[]\,(A[\gamma])\,(B[\gamma^+])\,\delta}}"', from=2-2, to=3-2]
	\arrow["{\raisebox{-1em}{$\Pi\,({[\circ]\,A\,\gamma\,\delta})\,({[\circ^+]\,B\,\gamma\,\delta})$}}"', from=3-1, to=3-2]
    \end{tikzcd}\hspace{-0.2em}
    &&
\begin{tikzcd}[ampersand replacement=\&]
	{(\Pi\,A\,B)[\id]} \\
	{\Pi\,(A[\id])\,(B[\id^+])} \& {\Pi\,A\,B} \\
	{\phantom{\bullet}} \& {\phantom{\bullet}}
	\arrow["{\Pi[]\,A\,B\,\id}"', from=1-1, to=2-1]
	\arrow["{[\id]\,(\Pi\,A\,B)}", from=1-1, to=2-2]
	\arrow["{\raisebox{-1em}{$\Pi\,([\id]\,A)\,([\id^+]\,B)$}}"', from=2-1, to=2-2]
	\arrow["{\phantom{\Pi}}"', draw=none, from=3-1, to=3-2]
\end{tikzcd}
  \end{alignat*}
  In the types of $\U[\circ]$ and $\U[\id]$ above, $[{\circ}]^\U$ and
  $[\id]^\U$ abbreviate the following equality proofs. $[{\circ}]^\U$
  is the dotted line in the left dependent square which is over the
  right square. $[\id]^\U$ is the dotted line in the upper dependent
  triangle which is over the lower triangle. As the bottom lines in
  the base square/triangle are reflexivities, $[{\circ}]^\U$ and
  $[\id]^\U$ are homogeneous equalities, but all the other lines in
  the upper shapes are heterogeneous. Fillers of the base shapes are
  written in their center, they are operations of the groupoid-syntax
  defined before. In Cubical Agda, the dotted lines are defined via
  heterogeneous composition. The $\blank[\blank]^\U\mathsf{filler}$
  operation is part of Definition \ref{def:wild}.\vspace{-1.5em}
  \[
    \begin{tikzcd}[ampersand replacement=\&]
	{\hat{A}[\gamma\circ\delta]} \& {\hat{A}[\gamma][\delta]} \& {\U[\gamma\circ\delta]} \& {\U[\gamma][\delta]} \\
	\& {\hat{A}[\gamma]^\U[\delta]} \&\& {\U[\delta]} \\
	{\hat{A}[\gamma\circ\delta]^\U} \& {\hat{A}[\gamma]^\U[\delta]^\U} \& \U \& \U
	\arrow["{[{\circ}]\,\hat{A}\,\gamma\,\delta}", from=1-1, to=1-2]
	\arrow["{\hat{A}[\gamma\circ\delta]^\U\mathsf{filler}}"', from=1-1, to=3-1]
	\arrow["{\hat{A}[\gamma]^\U\mathsf{filler}}"', from=1-2, to=2-2]
	\arrow["{[{\circ}]\,\U\,\gamma\,\delta}", from=1-3, to=1-4]
	\arrow["{\U[]\,(\gamma\circ\delta)}"', from=1-3, to=3-3]
	\arrow["{\U[{\circ}]\,\gamma\,\delta}"{description}, draw=none, from=1-3, to=3-4]
	\arrow["{\U[]\,\gamma}", from=1-4, to=2-4]
	\arrow["{\hat{A}[\gamma]^\U[\delta]^\U\mathsf{filler}}"', from=2-2, to=3-2]
	\arrow["{\U[]\,\delta}", from=2-4, to=3-4]
	\arrow["{[{\circ}]^\U\,\hat{A}\,\gamma\,\delta}"', dotted, from=3-1, to=3-2]
	\arrow[equals, from=3-3, to=3-4]
    \end{tikzcd}\hspace{3em}
    \begin{tikzcd}[ampersand replacement=\&]
	{\hat{A}[\id]} \\
	{\hat{A}[\id]^\U} \& {\hat{A}} \\
	{\U[\id]} \\
	\U \& \U
	\arrow["{A[\id]^\U\mathsf{filler}}"', from=1-1, to=2-1]
	\arrow["{[\id]\,\hat{A}}", from=1-1, to=2-2]
	\arrow["{[\id]^\U\,\hat{A}}"', dotted, from=2-1, to=2-2]
	\arrow["{\U[]\,\id}"', from=3-1, to=4-1]
	\arrow[""{name=0, anchor=center, inner sep=0}, "{[\id]\,\U}", from=3-1, to=4-2]
	\arrow[equals, from=4-1, to=4-2]
	\arrow["{\U[\id]}"{description}, draw=none, from=4-1, to=0]
    \end{tikzcd}
  \]
  In the types of $\Pi[{\circ}]$ and $\Pi[\id]$ above, we used the
  following abbreviations of paths. $[{\circ}^+]$ and $[\id^+]$ are
  the dotted lines in the upper triangles, which are over the lower
  triangles. The dotted lines are defined by composition. We also give
  names to the fillers of the upper triangles which will be used in
  Figures \ref{fig:picomp} and \ref{fig:piid}, respectively: \\
\adjustbox{minipage={\textwidth},scale=0.9,left}{
  \[
\begin{tikzcd}[ampersand replacement=\&]
	{B[(\gamma\circ\delta)^+]} \& {B[\gamma^+\circ\delta^+]} \& {A[\gamma\circ\delta]} \& {A[\gamma][\delta]} \\
	\& {B[\gamma^+][\delta^+]} \&\& {A[\gamma][\delta]}
	\arrow["{{\circ}^+\,\gamma\,\delta}", from=1-1, to=1-2]
	\arrow[""{name=0, anchor=center, inner sep=0}, "{[{\circ}^+]\,B\,\gamma\,\delta}"', curve={height=18pt}, dotted, from=1-1, to=2-2]
	\arrow["{[\circ]\,B\,\gamma^+\,\delta^+}", from=1-2, to=2-2]
	\arrow["{[{\circ}]\,A\,\gamma\,\delta}", from=1-3, to=1-4]
	\arrow["{[{\circ}]\,A\,\gamma\,\delta}"', curve={height=18pt}, from=1-3, to=2-4]
	\arrow[equals, from=1-4, to=2-4]
	\arrow["{[{\circ}^+]\mathsf{filler}\,B\,\gamma\,\delta}"{description}, draw=none, from=1-2, to=0]
\end{tikzcd}
\begin{tikzcd}[ampersand replacement=\&]
	{B[\id^+]} \& {B[\id]} \& {A[\id]} \& A \\
	\& B \&\& A
	\arrow["{{\id}^+}", from=1-1, to=1-2]
	\arrow[""{name=0, anchor=center, inner sep=0}, "{[\id^+]\,B}"', curve={height=18pt}, dotted, from=1-1, to=2-2]
	\arrow["{[\id]\,B}", from=1-2, to=2-2]
	\arrow["{[{\id}]\,A}", from=1-3, to=1-4]
	\arrow["{[\id]\,A}"', curve={height=18pt}, from=1-3, to=2-4]
	\arrow[equals, from=1-4, to=2-4]
	\arrow["{[\id^+]\mathsf{filler}\,B}"{description}, draw=none, from=1-2, to=0]
\end{tikzcd}
\]
}

\vspace{1em}

\noindent This concludes the definition of the groupoid-syntax.
\end{defn}
\begin{notation}
We denote the components of the set-syntax by $_\S$ and the
groupoid-syntax by $_\G$ subscripts, e.g.\ $\Con_\S$ and $\Con_\G$.
\end{notation}
We cannot redo the interpretation of Problem
\ref{prob:type_interpretation} because $\Type$ is not a groupoid, but
we can refine it by interpreting types into $\Set$.
\begin{construction}[Set interpretation of the groupoid-syntax \href{https://csl26-cohtt.github.io/TT.Groupoid.SetInterp.html}{\Agda}]\label{con:grp2set}
  We define the following functions mutually by pattern matching on
  the groupoid-syntax where $\Set :\equiv (X:\Type)\times((e\,e':x_0=_X x_1)\to e=e')$.
  \begin{alignat*}{10}
    & \llbracket\blank\rrbracket && : \Con_\G\to\Set                                                                                            && \llbracket\blank\rrbracket && : \Ty_\G\,\Gamma\to\llbracket\Gamma\rrbracket_{.1}\to\Set \\                                          
    & \llbracket\blank\rrbracket && : \Sub_\G\,\Delta\,\Gamma\to\llbracket\Delta\rrbracket_{.1}\to\llbracket\Gamma\rrbracket_{.1} \hspace{4.5em} &&\llbracket\blank\rrbracket && : \Tm_\G\,\Gamma\,A\to(\gamma:\llbracket\Gamma\rrbracket_{.1})\to(\llbracket A\rrbracket\,\gamma)_{.1}
  \end{alignat*}
  The cases for the constructors are analogous to the ones in Problem
  \ref{prob:type_interpretation}, with additional proofs of
  truncation-preservation: e.g.\ the empty context needs that $\top$
  is a set, context extension needs that $\Sigma$ preserves
  set-truncation. $\U$ is interpreted by $X$, $\El$ by $Y$. We
  interpret the extra truncation constructors as follows: we prove
  $\mathsf{isGrpdTy}$ by the fact that $\Set$ forms a groupoid, while
  functions between sets are sets, which proves $\mathsf{isSetSub}$
  and $\mathsf{isSetTm}$. All 1-dimensional equalities and the
  2-equalities $\U[\id]$, $\El[\id]$, $\Pi[\id]$ are interpreted by
  $\refl$, while the 2-equalities $\U[\circ]$, $\El[\circ]$,
  $\Pi[\circ]$ use cubical filling because these include compositions
  in the formalisation (this could be avoided using the technique
  explained in Section \ref{sec:meta}).
\end{construction}
The groupoid-syntax can be trivially interpreted into the set-syntax:
\begin{construction}[Set-syntax interpretation of the groupoid-syntax \href{https://csl26-cohtt.github.io/TT.Groupoid.ToSet.html}{\Agda}]\label{con:grp2setsyn}
 By pattern matching: 
  \begin{alignat*}{10}
    & \llbracket\blank\rrbracket && : \Con_\G\to\Con_\S                                                                                      && \llbracket\blank\rrbracket && : \Ty_\G\,\Gamma\to\Ty_\S\,\llbracket\Gamma\rrbracket \\                        
    & \llbracket\blank\rrbracket && : \Sub_\G\,\Delta\,\Gamma\to\Sub_\S\,\llbracket\Delta\rrbracket\,\llbracket\Gamma\rrbracket \hspace{5em} && \llbracket\blank\rrbracket && : \Tm_\S\,\Gamma\,A\to\Tm_\S\,\llbracket\Gamma\rrbracket\,\llbracket A\rrbracket
  \end{alignat*}
  Everything is interpreted by the corresponding component in the
  set-syntax, except (i) $\mathsf{isGrpdTy}_\G$ is interpreted by
  applying cumulativity of truncation levels to $\mathsf{isSetTy}_\S$;
  (ii) the higher equalities $\U[{\circ}],\dots,\Pi[\id]$ are
  interpreted by $\mathsf{isSetTy}_\S$.
\end{construction}

\section{\texorpdfstring{$\alpha$-normalisation for the groupoid-syntax}{α-normalisation for the groupoid-syntax}}\label{sec:alpha}

In this section we prove that although elements of $\Ty_\G$ in the
groupoid-syntax are only groupoid-truncated, they form a set. We define the
set of $\alpha$-normal forms for $\Ty_\G$, and then we show that every
$\Ty_\G$ is a retract of its $\alpha$-normal
forms. $\alpha$-normalisation is the process of eliminating explicit
instantiations from types along the substitution laws for types.

\subsection{\texorpdfstring{$\alpha$-normal forms}{α-normal forms}}

\begin{defn}[$\alpha$-normal forms \href{https://csl26-cohtt.github.io/TT.Groupoid.NTy.html\#NTy}{\Agda}]
  $\alpha$-normal forms are given by the inductive family $\NTy$ which
  is defined mutually with the quote function
  $\ulcorner\blank\urcorner$. We overload constructor names and
  metavariables, but use \textcolor{BrickRed}{brick red colour} for disambiguation.
  \begin{alignat*}{10}
    & \NTy && : \Con_\G\to\Type                     && \ulcorner\blank\urcorner && : \NTy\,\Gamma\to\Ty_\G\,\Gamma \\                                         
    & \N\U && : \NTy\,\Gamma                       && \ulcorner\N\U\urcorner && :\equiv \U_\G \\                                                             
    & \N\El && : \Tm_\G\,\Gamma\,\U_\G\to\NTy\,\Gamma  && \ulcorner\N\El\,\hat{A}\urcorner && :\equiv \El_\G\,\hat{A} \\                                         
    & \N\Pi && : (\N A:\NTy\,\Gamma)\to\NTy\,(\Gamma\ext_\G\ulcorner \N A\urcorner)\to\NTy\,\Gamma \hspace{3em} && \ulcorner\N\Pi\,\N A\,\N B\urcorner && :\equiv \Pi_\G\,\ulcorner \N A\urcorner\,\ulcorner \N B\urcorner
  \end{alignat*}
\end{defn}
It is not obvious that $\alpha$-normal forms are a set because $\NTy$
is indexed by $\Con_\G$ which contains elements of $\Ty_\G$ for which at
this point we do not know that it forms a set. $\NTy$ also includes
non-normal terms (via $\N\El$), hence we cannot rely on decidability
of equality and Hedberg's theorem \cite[Section
  7.2]{HoTTbook}. However, we can still show the following.
\begin{lemm}[\href{https://csl26-cohtt.github.io/TT.Groupoid.NTy.html\#isSetNTy}{\Agda}]
  $\NTy\,\Gamma$ forms a set.
\end{lemm}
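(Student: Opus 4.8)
The plan is to use the encode--decode method in the form of the reflexive-mere-relation criterion for sets \cite[Section 7.2]{HoTTbook}: to show $\NTy\,\Gamma$ is a set it suffices to exhibit a proposition-valued relation $\Cover : \NTy\,\Gamma\to\NTy\,\Gamma\to\Type$ that is reflexive and implies the identity type. I would define $\Cover$ by double recursion on the two $\alpha$-normal forms, sending equal head constructors to an atomic or recursive comparison and distinct head constructors to $\bot$. The atomic clauses are immediate: $\Cover\,\N\U\,\N\U :\equiv \top$ and $\Cover\,(\N\El\,\hat A)\,(\N\El\,\hat B) :\equiv (\hat A =_{\Tm_\G\,\Gamma\,\U_\G}\hat B)$. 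The only delicate clause is $\N\Pi$ against $\N\Pi$, treated below.

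Once $\Cover$ is in place I would verify the three hypotheses of the criterion. First, each $\Cover\,\N A\,\N B$ is a proposition, proved by induction: $\top$ and $\bot$ are propositions, the $\N\El$ clause is a proposition because $\Tm_\G\,\Gamma\,\U_\G$ is a set by the $\mathsf{isSetTm}$ constructor of the groupoid-syntax, and propositions are closed under $\Sigma$ and under dependent paths. Second, $\Cover$ is reflexive, by an induction producing the inhabitant of $\top$, $\refl$ on terms, and the recursive reflexivity witnesses. Third, I would define $\decode : \Cover\,\N A\,\N B\to\N A = \N B$ by induction, turning each witness into the corresponding congruence: $\refl$ for $\N\U$, $\El_\G$ applied to the term equality for $\N\El$, and $\Pi_\G$ applied to the recursively decoded equalities for $\N\Pi$. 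The criterion then yields that $\NTy\,\Gamma$ is a set.

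The hard part will be the $\N\Pi$ clause, precisely because $\NTy$ is indexed over $\Con_\G$, which at this stage is not yet known to be a set. When comparing $\N\Pi\,\N A\,\N B$ with $\N\Pi\,\N{A'}\,\N{B'}$, the bodies live in the \emph{different} fibres $\NTy\,(\Gamma\ext_\G\ulcorner\N A\urcorner)$ and $\NTy\,(\Gamma\ext_\G\ulcorner\N{A'}\urcorner)$, so $\Cover$ cannot recurse on them homogeneously. My plan is to define $\Cover$ and $\decode$ \emph{mutually}: the first component of the clause is a witness $c : \Cover\,\N A\,\N{A'}$, and I use $\ulcorner\decode\,c\urcorner : \ulcorner\N A\urcorner = \ulcorner\N{A'}\urcorner$, together with the congruence of $\Gamma\ext_\G\blank$, to transport $\N{B'}$ into the fibre of $\N B$ and recurse, setting $\Cover\,(\N\Pi\,\N A\,\N B)\,(\N\Pi\,\N{A'}\,\N{B'}) :\equiv (c:\Cover\,\N A\,\N{A'})\times\Cover\,\N B\,(\text{transported }\N{B'})$. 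As a $\Sigma$ of a proposition with a family of propositions over it, this stays a proposition, and $\decode$ on it assembles the decoded head equality with the (dependent) decoded body equality into a path of $\N\Pi$'s. The genuinely fiddly point I expect to fight with is reflexivity in this clause: it requires that $\decode$ sends the reflexivity witness of $\Cover\,\N A\,\N A$ to $\refl$ up to the coherence needed to cancel the transport, which I would establish as an auxiliary lemma proved by induction alongside the mutual definitions. Once that transport is trivialised, reflexivity in the $\N\Pi$ case reduces to the recursive reflexivity of $\N B$, and the whole construction goes through, sidestepping the missing set-truncation of $\Con_\G$.
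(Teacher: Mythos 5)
Your proposal is correct and follows essentially the same route as the paper: an encode--decode argument with $\Cover$ defined mutually with $\decode$, the $\N\Pi$ clause recursing on a transport of the codomain along the decoded equality of the domains, propositionality of $\Cover$ via $\mathsf{isSetTm}$ and closure of propositions under $\Sigma$, and a mutual reflexivity/``decode-of-reflexivity-is-$\refl$'' lemma. The only cosmetic differences are the direction of the transport in the $\N\Pi$ clause and that you invoke the packaged reflexive-mere-relation criterion where the paper spells out $\mathsf{encode}$, $\mathsf{decEnc}$ and the retraction argument explicitly.
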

\begin{proof}
We use the encode-decode method \cite{HoTTbook} to characterise
equality of $\NTy$. The cover (or code) relation is defined by
double-recursion on $\NTy$, mutually with the decode function.
\begin{alignat*}{10}
  & \rlap{$\Cover : \NTy\,\Gamma\to\NTy\,\Gamma\to\Type \hspace{10.5em}\decode : \Cover\,\N{A_0}\,\N{A_1}\to \N{A_0} = \N{A_1}$} \\
  & \Cover\,\N\U\,&& \N\U && :\equiv \top \\
  & \Cover\,(\N\El\,\hat{A_0})\,&& (\N\El\,\hat{A_1}) && :\equiv \hat{A_0} = \hat{A_1} \\
  & \Cover\,(\N\Pi\,\N{A_0}\,\N{B_0})\,&&(\N\Pi\,\N{A_1}\,\N{B_1}) && :\equiv (\N{A_2} : \Cover\,\N{A_0}\,\N{A_1})\times\Cover\,((\decode\,\N{A_2})_*\,\N{B_0})\,\N{B_1} \\
  & \Cover\,\_ && \_ && :\equiv \bot
\end{alignat*}
The $\decode$ function is defined by double-induction on $\N{A_0}$ and
$\N{A_1}$. Again, by double induction on $\NTy$, we prove that $\Cover$
is a proposition. By mutual induction on $\NTy$, we prove that
$\Cover$ is reflexive and decoding this reflexivity proof gives an
identity (reflexivity) path.
\begin{alignat*}{10}
  & \mathsf{reflCode} && : (\N A : \NTy\,\Gamma)\to\Cover\,\N A\,\N A && \hspace{2.1em}
  & \mathsf{decRefl} && : (\N A : \NTy\,\Gamma)\to \decode\,(\mathsf{reflCode}\,\N A) = \refl
\end{alignat*}
We use these and J to define encode and prove that $\decode$ is a
retraction:
\begin{alignat*}{10}
  & \mathsf{encode} && : \N{A_0} = \N{A_1} \to \Cover\,\N{A_0}\,\N{A_1} && \hspace{1.9em}
  & \mathsf{decEnc} && : (\N{A_2} : \N{A_0} = \N{A_1})\to \decode\,(\mathsf{encode}\,\N{A_2}) = \N{A_2}
\end{alignat*}
As retractions preserve homotopy levels, from
$\Cover\,\N{A_0}\,\N{A_1}$ being a proposition, we obtain that
$\N{A_0} = \N{A_1}$ is a proposition, hence $\NTy\,\Gamma$ is a set.
\end{proof}

\subsection{\texorpdfstring{$\alpha$-normalisation}{α-normalisation}}

We want to show that $\ulcorner\blank\urcorner :
\NTy\,\Gamma\to\Ty_\G\,\Gamma$ is a retraction, which will imply that
$\Ty_\G\,\Gamma$ is a set. For this, we define the other direction which
is the normalisation function and its completeness.
\begin{notation}
  For the rest of this section, as we only talk about the
  groupoid-syntax, we do not write the $_\G$ subscripts, so $\Ty$ means
  $\Ty_\G$, $\U$ means $\U_\G$, and so on.
\end{notation}
\begin{problem}[$\alpha$-normalisation \href{https://csl26-cohtt.github.io/TT.Groupoid.NTy.html\#norm}{\Agda}]\label{prob:norm}
  We define the following two functions by mutual induction on the
  groupoid-syntax.
\begin{alignat*}{10}
  & \norm && : \Ty\,\Gamma\to\NTy\,\Gamma && \hspace{3em}
  & \comp && : (A:\Ty\,\Gamma)\to\ulcorner\norm\,A\urcorner = A
\end{alignat*}
\end{problem}
\begin{proof}[Construction]
  On $\U$ and $\El$, the construction is trivial.
  \begin{alignat*}{10}
    & \norm\,\U && :\equiv \N\U  \hspace{3em} && \norm\,(\El\,\hat{A}) && :\equiv \N\El\,\hat{A} && \hspace{3em}
    & \comp\,\U && :\equiv \refl \hspace{3em} && \comp\,(\El\,\hat{A}) && :\equiv \refl
  \end{alignat*}
  On $\Pi$, we normalise recursively, but as $\norm\,B :
  \NTy\,(\Gamma\ext A)$, we need to transport it over completeness of
  $A$ to obtain an $\NTy\,(\Gamma\ext\ulcorner\norm\,A\urcorner)$:
  \begin{alignat*}{10}
    & \norm\,(\Pi\,A\,B) :\equiv \N\Pi\,(\norm\,A)\,\big((\comp\,A)_*\,\norm\,B\big) \\
    & \comp\,(\Pi\,A\,B) : \ulcorner\norm\,(\Pi\,A\,B)\urcorner \equiv \\
    & \hphantom{\comp\,(\Pi\,A\,B) : {}} \Pi\,\ulcorner\norm\,A\urcorner\,\ulcorner(\comp\,A)_*\,(\norm\,B)\urcorner \overset{\comp\,A}{=}
    \Pi\,A\,\ulcorner\norm\,B\urcorner \overset{\comp\,B}{=}
    \Pi\,A\,B
  \end{alignat*}
  To define $\norm$ on instantiated types, we need to instantiate
  normal forms. For this, we first show the following.
  \begin{problem}[\href{https://csl26-cohtt.github.io/TT.Groupoid.NTy.html\#_\%5B_\%5D\%E1\%B5\%80\%E1\%B4\%BA}{\Agda}]\label{prob:inst}
    $\NTy$ can be equipped with an instantiation operation
    $\blank\N[\blank\N]$ which is functorial, and
    $\ulcorner\blank\urcorner$ is a ``2-natural transformation''\footnote{
    We do not formally show that $\Ty$ and $\NTy$ form 2-functors, we use the phrase for intuition.} into
    $\Ty$, as follows (note the difference in colours for the
    overloaded names).
    \begin{alignat*}{10}
      & \blank\N[\blank\N] : \NTy\,\Gamma\to\Sub\,\Delta\,\Gamma\to\NTy\,\Delta \hspace{1.4em}
      \N[\circ\N] : \forall \N A\,\gamma\,\delta\ldotp\N A\N[\gamma\circ\delta\N] = \N A\N[\gamma\N]\N[\delta\N] \hspace{1.4em}
      \N[\id\N] : \forall\N A\ldotp\N A\N[\id\N] = \N A \\
      & \ulcorner\urcorner[] : \forall\N A\,\gamma\ldotp\ulcorner \N A\urcorner[\gamma] = \ulcorner \N A\N[\gamma\N]\urcorner \\
      & \ulcorner\urcorner[\circ] : \forall{\N A}\,\gamma\,\delta\ldotp
      \begin{tikzcd}[ampersand replacement=\&]
	{\ulcorner\N A\urcorner[\gamma\circ\delta]} \& {\ulcorner\N A\urcorner[\gamma][\delta]} \\
	\& {\ulcorner\N A\N[\gamma\N]\urcorner[\delta]} \\
	{\ulcorner\N A\N[\gamma\circ\delta\N]\urcorner} \& {\ulcorner\N A\N[\gamma\N]\N[\delta\N]\urcorner}
	\arrow["{[\circ]\,\ulcorner\N A\urcorner\,\gamma\,\delta}", from=1-1, to=1-2]
	\arrow["{\ulcorner\urcorner[]\,\N A\,(\gamma\circ\delta)}"', from=1-1, to=3-1]
	\arrow["{\ulcorner\urcorner[]\,\N A\,\gamma}", from=1-2, to=2-2]
	\arrow["{\ulcorner\urcorner[]\,(\N A\N[\gamma\N])\,\delta}", from=2-2, to=3-2]
	\arrow["{\N[\circ\N]\,\N A\,\gamma\,\delta}"', from=3-1, to=3-2]
      \end{tikzcd}
      \hspace{1.2em}
      \ulcorner\urcorner[\id] : \forall{\N A}\ldotp
      \begin{tikzcd}[ampersand replacement=\&]
        {\ulcorner\N A\urcorner[\id]} \\
        {\ulcorner\N A\N[\id\N]\urcorner} \& {\ulcorner\N A\urcorner}
        \arrow["{\ulcorner\urcorner[]\,\N A\,\id}"', from=1-1, to=2-1]
        \arrow["{[\id]\,\ulcorner\N A\urcorner}", from=1-1, to=2-2]
        \arrow["{\N[\id\N]\,\N A}"', from=2-1, to=2-2]
      \end{tikzcd}
    \end{alignat*}
  \end{problem}
  \begin{proof}[Construction for Problem \ref{prob:inst}]
    Instantiation of normal types is by mutual induction with
    naturality of $\ulcorner\blank\urcorner$. Instantiating $\N\U$
    just changes the implicit context arguments, instantiating $\N\El$
    means instantiating the term (which is an ordinary $\Tm_\G$ term,
    and is not normal), instantiating $\N\Pi$ is recursive:
    \begin{alignat*}{10}
      & \N\U\N[\gamma\N] && :\equiv \N\U \hspace{4em}
      & (\N\El\,\hat{A})\N[\gamma\N] && :\equiv \N\El\,(\hat{A}[\gamma]^\U) \hspace{4em}
      & (\N\Pi\,\N A\,\N B)\N[\gamma\N] && :\equiv \Pi\,(\N A\N[\gamma\N])\,(\N B\N[\gamma^{\ulcorner+\urcorner}\N])
    \end{alignat*}
    The operation $\blank\N[\blank^{\ulcorner+\urcorner}\N]$ used in
    the codomain of $\Pi$ is defined as follows. It also comes with a
    filler equation.
    \begin{alignat*}{10}
      & \blank\N[\blank^{\ulcorner+\urcorner}\N] && : \NTy\,(\Gamma\ext\ulcorner\N A\urcorner)\to(\gamma:\Sub\,\Delta\,\Gamma)\to\NTy\,(\Delta\ext \ulcorner\N A\N[\gamma\N]\urcorner) \\
      & \N B\N[\gamma^{\ulcorner+\urcorner}\N] && :\equiv (\ulcorner\urcorner[]\,\N A\,\gamma)_*\,(\N B\N[\gamma^+\N]) \\
      & \rlap{$\N B\N[\gamma^{\ulcorner+\urcorner}\N]\mathsf{filler} : \N B\N[\gamma^+\N] =^{\ulcorner\urcorner[]\,\N A\,\gamma} \N B\N[\gamma^{\ulcorner+\urcorner}\N]$}
    \end{alignat*}
    Analogously to $[{\circ}^+]$ and $[\id^+]$ of Definition
    \ref{def:groupoid}, we define their ``normal substitution''
    versions $\N[{\circ}^+\N]$ and $\N[\id^+\N]$.
    Naturality is reusing the substitution law of the corresponding
    syntactic operation, and in the case of $\N\Pi\,\N A\,\N B$,
    naturality for $\N A$ and $\N B$ are used (in the codomain of
    $\Pi$, both $\blank\N[\blank^{\ulcorner+\urcorner}\N]$ and its
    filler are used):
    \begin{alignat*}{10}
      & \ulcorner\urcorner[]\,\N\U\,\gamma && {}: \ulcorner\N\U\urcorner[\gamma] \equiv \U[\gamma] \overset{\U[]\,\gamma}{=} \U \equiv \ulcorner\N\U\N[\gamma\N]\urcorner \\
      & \ulcorner\urcorner[]\,(\N\El\,\hat{A})\,\gamma && {}: \ulcorner\N\El\,\hat{A}\urcorner[\gamma] \equiv (\El\,\hat{A})[\gamma] \overset{\El[]\,\hat{A}\,\gamma}{=} \El\,(\hat{A}[\gamma]^\U) \equiv \ulcorner\N(\El\,\hat{A})\N[\gamma\N]\urcorner \\
      & \ulcorner\urcorner[]\,(\N\Pi\,\N A\,\N B)\,\gamma && {}:
      \ulcorner\N\Pi\,\N A\,\N B\urcorner[\gamma] \equiv
      (\Pi\,\ulcorner\N A\urcorner\,\ulcorner\N B\urcorner)[\gamma] \overset{\Pi[]\,\ulcorner\N A\urcorner\,\ulcorner\N B\urcorner\,\gamma}{=}
      \Pi\,(\ulcorner\N A\urcorner[\gamma])\,(\ulcorner\N B\urcorner[\gamma^+]) \overset{\ulcorner\urcorner[]\,\N B\,\gamma^+}{=} \\
      & \rlap{$\hspace{5em}\Pi\,(\ulcorner\N A\urcorner[\gamma])\,\ulcorner\N B\N[\gamma^+\N]\urcorner \overset{\Pi\,(\ulcorner\urcorner[]\,\N A\,\gamma)\,\ulcorner\N B\N[\gamma^{\ulcorner+\urcorner}\N]\mathsf{filler}\urcorner}{=}
      \Pi\,\ulcorner\N A\N[\gamma\N]\urcorner\,\ulcorner\N B\N[\gamma^{\ulcorner+\urcorner}\N]\urcorner \equiv
      \ulcorner(\N\Pi\,\N A\,\N B)\N[\gamma\N]\urcorner$}
    \end{alignat*}
    The functoriality equation $\N[{\circ}\N]$ and the 2-naturality
    square $\ulcorner\urcorner[\circ]$ are proven mutually by
    induction on $\NTy$. The composition functor law for $\N\U$ is
    definitional, for $\N\El$ it reuses the functor law for terms of
    type $\U$, for $\N\Pi$ it is recursive:
    \begin{alignat*}{10}
      & \N[{\circ}\N]\,\N\U\,\gamma\,\delta && {}: \N\U\N[\gamma\circ\delta\N] \equiv \N\U \equiv \N\U\N[\gamma\N]\N[\delta\N] \\
      & \N[{\circ}\N]\,(\N\El\,\hat{A})\,\gamma\,\delta && {}: (\N\El\,\hat{A})\N[\gamma\circ\delta\N] \equiv
      \N\El\,(\hat{A}[\gamma\circ\delta]^\U) \overset{[\circ]^\U\,\hat{A}\,\gamma\,\delta}{=}
      \N\El\,(\hat{A}[\gamma]^\U[\delta]^\U) \equiv
      (\N\El\,\hat{A})\N[\gamma\N]\N[\delta\N] \\
      & \N[{\circ}\N]\,(\N\Pi\,\N A\,\N B)\,\gamma\,\delta && {}:
      (\N\Pi\,\N A\,\N B)\N[\gamma\circ\delta\N] \equiv
      \N\Pi\,(\N A\N[\gamma\circ\delta\N])\,(\N B\N[(\gamma\circ\delta)^{\ulcorner+\urcorner}\N]) \overset{\N\Pi\,(\N[{\circ}\N]\,\N A\,\gamma\,\delta)\,(\N[{\circ}^{\ulcorner+\urcorner}\N]\,\N B\,\gamma\,\delta)}{=} \\
      & && \hphantom{{}:{}} \N\Pi\,(\N A\N[\gamma\N]\N[\delta\N])\,(\N B\N[\gamma^{\ulcorner+\urcorner}\N]\N[\delta^{\ulcorner+\urcorner}\N]) \equiv
      (\N\Pi\,\N A\,\N B)\N[\gamma\N]\N[\delta\N]
    \end{alignat*}
    In the codomain part of the proof for $\N\Pi$ above, we used
    functoriality of the $\blank\N[\blank^{\ulcorner+\urcorner}\N]$
    operation which is defined by the dotted line (given by composition) in the following
    left square which is over the right square. We also give name to
    the filler of the left square.
    \[
    \begin{tikzcd}[ampersand replacement=\&]
	{\N B\N[(\gamma\circ\delta)^+\N]} \& {\N B\N[\gamma^+\N]\N[\delta^+\N]} \&\& {\ulcorner\N A\urcorner[\gamma\circ\delta]} \& {\ulcorner\N A\urcorner[\gamma][\delta]} \\
	\& {\hspace{2em}\N B\N[\gamma^{\ulcorner+\urcorner}\N]\N[\delta^+\N]} \&\&\& {\ulcorner\N A\N[\gamma\N]\urcorner[\delta]} \\
	{\N B\N[(\gamma\circ\delta)^{\ulcorner+\urcorner}\N]} \& {\N B\N[\gamma^{\ulcorner+\urcorner}\N]\N[\delta^{\ulcorner+\urcorner}\N]} \&\& {\ulcorner\N A\N[\gamma\circ\delta\N]\urcorner} \& {\ulcorner\N A\N[\gamma\N]\N[\delta\N]\urcorner}
	\arrow[""{name=0, anchor=center, inner sep=0}, "{\N[\circ^+\N]\,\N B\,\gamma\,\delta}", from=1-1, to=1-2]
	\arrow["{\N B\N[(\gamma\circ\delta)^{\ulcorner+\urcorner}\N]\mathsf{filler}}"', from=1-1, to=3-1]
	\arrow["{\N B\N[\gamma^{\ulcorner+\urcorner}\N]\mathsf{filler}}"', from=1-2, to=2-2]
	\arrow["{[\circ]\,\ulcorner\N A\urcorner\,\gamma\,\delta}", from=1-4, to=1-5]
	\arrow["{\ulcorner\urcorner[]\,\N A\,(\gamma\circ\delta)}"', from=1-4, to=3-4]
	\arrow["{\ulcorner\urcorner[\circ]\,\N A\,\gamma\,\delta}"{description}, draw=none, from=1-4, to=3-5]
	\arrow["{\ulcorner\urcorner[]\,\N A\,\gamma}", from=1-5, to=2-5]
	\arrow["{(\N B\N[\gamma^{\ulcorner+\urcorner}\N])\N[\delta^{\ulcorner+\urcorner}\N]\mathsf{filler}}"', from=2-2, to=3-2]
	\arrow["{\ulcorner\urcorner[]\,(\N A\N[\gamma\N])\,\delta}", from=2-5, to=3-5]
	\arrow[""{name=1, anchor=center, inner sep=0}, "{\N[\circ^{\ulcorner+\urcorner}\N]\,\N B\,\gamma\,\delta}"', dotted, from=3-1, to=3-2]
	\arrow["{\N[\circ\N]\,\N A\,\gamma\,\delta}"', from=3-4, to=3-5]
	\arrow["{\N[\circ^{\ulcorner+\urcorner}\N]\mathsf{filler}\,\N B\,\gamma\,\delta}"{description}, draw=none, from=0, to=1]
    \end{tikzcd}
    \]
    The $\ulcorner\urcorner[\circ]$-squares for $\N\U$ and $\N\El$ are
    definitionally the same as $\U[\circ]$ and $\El[\circ]$,
    respectively. We present the diagrammatic proof of $\U[\circ]$ for
    clarity, where double line means definitional equality. In this
    diagram, the inner and outer squares are definitionally
    equal. The square for $\N\Pi$ is more involved, we present it in Figure \ref{fig:picomp} in the Appendix.
    \begin{alignat*}{10}
      & {\ulcorner\urcorner[\circ]\,\N\U\,\gamma\,\delta :\equiv \U[\circ]\,\gamma\,\delta}
      \hspace{4em}{
      \begin{tikzcd}[ampersand replacement=\&,column sep=small]
	{\ulcorner\N\U\urcorner[\gamma\circ\delta]} \&\&\& {\ulcorner\N\U\urcorner[\gamma][\delta]} \\
	\& {\U[\gamma\circ\delta]} \& {\U[\gamma][\delta]} \\
	\&\& {\U[\delta]} \& {\ulcorner\N\U\N[\gamma\N]\urcorner[\delta]} \\
	\& \U \& \U \\
	{\ulcorner\N\U\N[\gamma\circ\delta\N]\urcorner} \&\&\& {\ulcorner\N\U\N[\gamma\N]\N[\delta\N]\urcorner}
	\arrow[""{name=0, anchor=center, inner sep=0}, "{[\circ]\,\ulcorner\N\U\urcorner\,\gamma\,\delta}", from=1-1, to=1-4]
	\arrow[equals, from=1-1, to=2-2]
	\arrow[""{name=1, anchor=center, inner sep=0}, "{\ulcorner\urcorner[]\,\N\U\,(\gamma\circ\delta)}"', from=1-1, to=5-1]
	\arrow[""{name=2, anchor=center, inner sep=0}, "{\ulcorner\urcorner[]\,\N\U\,\gamma}", from=1-4, to=3-4]
	\arrow[""{name=3, anchor=center, inner sep=0}, "{\raisebox{0.5em}{$[\circ]\,\U\,\gamma\,\delta$}}", from=2-2, to=2-3]
	\arrow[""{name=4, anchor=center, inner sep=0}, "{\U[]\,(\gamma\circ\delta)}"', from=2-2, to=4-2]
	\arrow["{\U[\circ]\,\gamma\,\delta}"{description}, draw=none, from=2-2, to=4-3]
	\arrow[equals, from=2-3, to=1-4]
	\arrow[""{name=5, anchor=center, inner sep=0}, "{\U[]\,\gamma}", from=2-3, to=3-3]
	\arrow[equals, from=3-3, to=3-4]
	\arrow[""{name=6, anchor=center, inner sep=0}, "{\U[]\,\delta}", from=3-3, to=4-3]
	\arrow[""{name=7, anchor=center, inner sep=0}, "{\ulcorner\urcorner[]\,(\N\U\N[\gamma\N])\,\delta}", from=3-4, to=5-4]
	\arrow[""{name=8, anchor=center, inner sep=0}, equals, from=4-2, to=4-3]
	\arrow[equals, from=4-3, to=5-4]
	\arrow[equals, from=5-1, to=4-2]
	\arrow[""{name=9, anchor=center, inner sep=0}, "{\N[\circ\N]\,\N\U\,\gamma\,\delta}"', from=5-1, to=5-4]
	\arrow[shorten <=8pt, shorten >=18pt, equals, from=0, to=3]
	\arrow[shorten <=12pt, shorten >=38pt, equals, from=1, to=4]
	\arrow[shorten <=24pt, shorten >=24pt, equals, from=2, to=5]
	\arrow[shorten <=24pt, shorten >=24pt, equals, from=7, to=6]
	\arrow[shorten <=12pt, shorten >=12pt, equals, from=9, to=8]
        \end{tikzcd}}
        \\
      & \ulcorner\urcorner[\circ]\,(\N\El\,\hat{A})\,\gamma\,\delta :\equiv \El[\circ]\,\hat{A}\,\gamma\,\delta \text{, see also Figure \ref{fig:El}} \\
      & \ulcorner\urcorner[\circ]\,(\N\Pi\,\N A\,\N B)\,\gamma\,\delta : \text{see Figure \ref{fig:picomp}}
    \end{alignat*}
    The functoriality equation $\N[\id\N]$ is proven by mutual
    induction on $\NTy$.
    \begin{alignat*}{10}
      & \N[\id\N]\,\N\U : \N\U\N[\id\N] \equiv \N\U \hspace{12.4em}
      \N[\id\N]\,(\N\El\,\hat{A}) : (\N\El\,\hat{A})\N[\id\N] \equiv \N\El\,(\hat{A}[\id]^\U) \overset{[\id]^\U\,\hat{A}}{=} \N\El\,\hat{A} \\
      & \N[\id\N]\,(\N\Pi\,\N A\,\N B) : (\N\Pi\,\N A\,\N B)\N[\id\N] \equiv
      \N\Pi\,(\N A\N[\id\N])\,(\N B\N[\id^{\ulcorner+\urcorner}\N]) \overset{\N\Pi\,(\N[\id\N]\,\N A)\,(\N[\id^{\ulcorner+\urcorner}\N]\,\N B)}{=}
      \N\Pi\,\N A\,\N B
    \end{alignat*}
    In the codomain part of the proof for $\N\Pi$ above, we used
    functoriality of the $\blank\N[\blank^{\ulcorner+\urcorner}\N]$
    operation which is defined by the dotted line in the following
    upper triangle which is over the lower triangle. We also give a name
    to the filler of the upper triangle.
    \[
    \begin{tikzcd}[ampersand replacement=\&,column sep=huge]
	{\N B\N[\id^+\N]} \&\& {\ulcorner\N A\urcorner[\id]} \\
	{\N B\N[\id^{\ulcorner+\urcorner}\N]} \& {\N B} \& {\ulcorner\N A\N[\id\N]\urcorner} \& {\ulcorner\N A\urcorner}
	\arrow["{\N B\N[\id^{\ulcorner+\urcorner}\N]\mathsf{filler}}"', from=1-1, to=2-1]
	\arrow[""{name=0, anchor=center, inner sep=0}, "{\N[\id^+\N]\,\N B}", curve={height=-18pt}, from=1-1, to=2-2]
	\arrow["{\ulcorner\urcorner[]\,\N A\,\id}"', from=1-3, to=2-3]
	\arrow[""{name=1, anchor=center, inner sep=0}, "{[\id]\,\ulcorner\N A\urcorner}", curve={height=-18pt}, from=1-3, to=2-4]
	\arrow["{\N[\id^{\ulcorner+\urcorner}\N]\,\N B}"', dotted, from=2-1, to=2-2]
	\arrow["{\N[\id\N]\,\N A}"', from=2-3, to=2-4]
	\arrow["{\N[\id^{\ulcorner+\urcorner}\N]\mathsf{filler}\,\N B}"{description}, draw=none, from=2-1, to=0]
	\arrow["{\ulcorner\urcorner[\id]\,\N A}"{description}, draw=none, from=2-3, to=1]
    \end{tikzcd}
    \]
    The 2-naturality triangle $\ulcorner\urcorner[\id]$ is proven by
    induction on $\NTy$ as follows:
    \begin{alignat*}{10}
      & \ulcorner\urcorner[\id]\,\N\U && :\equiv \U[\id] \hspace{5em}
      & \ulcorner\urcorner[\id]\,(\N\El\,\hat{A}) && :\equiv \El[\id]\,\hat{A} \hspace{5em}
      & \ulcorner\urcorner[\id]\,(\N\Pi\,\N A\,\N B) && : \text{see Figure \ref{fig:piid}}
    \end{alignat*}
This finishes the construction for Problem \ref{prob:inst}.
\end{proof}
  So far, we defined $\norm$ and $\comp$ on $\U$, $\El$ and $\Pi$. On
  substituted types, we define normalisation and its completeness as follows.
  \begin{alignat*}{10}
    & \norm\,(A[\gamma]) :\equiv (\norm\,A)\N[\gamma\N] \\
    & \comp\,(A[\gamma]) : \ulcorner\norm\,(A[\gamma])\urcorner \equiv
    \ulcorner(\norm\,A)\N[\gamma\N]\urcorner \overset{\ulcorner\urcorner[]\,(\norm\,A)\,\gamma}{=}
    \ulcorner\norm\,A\urcorner[\gamma] \overset{\comp\,A}{=}
    A[\gamma]
  \end{alignat*}
  The action of $\norm$ on the functor laws is the corresponding
  functor law for instantiation of normal types, i.e. 
  $\norm\,([{\circ}]\,A\,\gamma\,\delta) :\equiv \N[{\circ}\N]\,(\norm\,A)\,\gamma\,\delta$ and
  $\norm\,([\id]\,A) :\equiv \N[\id\N]\,(\norm\,A)$.
  Completeness for the functor laws is the filling of the following
  two squares: \\
  \adjustbox{minipage={\textwidth},scale=0.9,left}{
  \[\begin{tikzcd}[column sep=large]
	{\ulcorner(\norm\,A)\N[\gamma\circ\delta\N]\urcorner} & {\ulcorner(\norm\,A)\N[\gamma\N]\N[\delta\N]\urcorner} \\
	& {\ulcorner(\norm\,A)\N[\gamma\N]\urcorner[\delta]} \\
	{\ulcorner\norm\,A\urcorner[\gamma\circ\delta]} & {\ulcorner(\norm\,A)\urcorner[\gamma][\delta]} \\
	{A[\gamma\circ\delta]} & {A[\gamma][\delta]}
	\arrow[""{name=0, anchor=center, inner sep=0}, "{\N[\circ\N]\,(\norm\,A)\,\gamma\,\delta}", from=1-1, to=1-2]
	\arrow["{\ulcorner\urcorner[]\,(\norm\,A)\,(\gamma\circ\delta)}"', from=1-1, to=3-1]
	\arrow["{\ulcorner\urcorner\,((\norm\,A)\N[\gamma\N])\,\delta}", from=1-2, to=2-2]
	\arrow["{\ulcorner\urcorner\,(\norm\,A)\,\gamma}", from=2-2, to=3-2]
	\arrow[""{name=1, anchor=center, inner sep=0}, "{[\circ]\,\ulcorner\norm\,A\urcorner\,\gamma\,\delta}", from=3-1, to=3-2]
	\arrow["{\comp\,A}"', from=3-1, to=4-1]
	\arrow["{\comp\,A}", from=3-2, to=4-2]
	\arrow[""{name=2, anchor=center, inner sep=0}, "{[\circ]\,A\,\gamma\,\delta}"', from=4-1, to=4-2]
	\arrow["{\ulcorner\urcorner[\circ]\,(\norm\,A)\,\gamma\,\delta}"{description, pos=0.3}, draw=none, from=0, to=1]
	\arrow["{\mathrm{nat}}"{description}, draw=none, from=1, to=2]
  \end{tikzcd}
  \begin{tikzcd}[ampersand replacement=\&,column sep=huge]
	{\ulcorner(\norm\,A)\N[\id\N]\urcorner} \& {\ulcorner\norm\,A\urcorner} \\
	{\ulcorner\norm\,A\urcorner[\id]} \& {\ulcorner\norm\,A\urcorner} \\
	{A[\id]} \& A
	\arrow["{\ulcorner\N[\id\N]\,(\norm\,A)\urcorner}", from=1-1, to=1-2]
	\arrow["{\ulcorner\urcorner[]\,(\norm\,A)\,\id}"', from=1-1, to=2-1]
	\arrow["{\ulcorner\urcorner[\id]\,(\norm\,A)}"{description}, draw=none, from=1-1, to=2-2]
	\arrow[equals, from=1-2, to=2-2]
	\arrow["{[\id]\,\ulcorner\norm\,A\urcorner}"', from=2-1, to=2-2]
	\arrow["{\comp\,A}"', from=2-1, to=3-1]
	\arrow["{\comp\,A}", from=2-2, to=3-2]
	\arrow["{\mathrm{nat}}"{description}, draw=none, from=3-1, to=2-2]
	\arrow["{[\id]\,A}"', from=3-1, to=3-2]
  \end{tikzcd}
  \]
  }
  \vspace{0.1em}
  
  \noindent The action of $\norm$ on the substitution laws for $\U$ and $\El$ is
  given by $\refl$, and $\comp$ is given by trivial fillers for
  degenerate squares. The actions of $\norm$ and $\comp$ on
  $\Pi[]\,A\,B\,\gamma$ only involve naturality squares and fillers,
  they are presented in Figures \ref{fig:normpisub} and
  \ref{fig:comppisub} in the appendix. 

  The rest of the $\Ty$-paths that $\norm$ and $\comp$ have to preserve are the
  2-paths $\U[\circ]$, $\U[\id]$, $\El[\circ]$, $\El[\id]$,
  $\Pi[\circ]$, $\Pi[\id]$. As $\norm$ returns in a set, these are all
  trivial. The function $\comp$ produces an equality between elements
  of $\Ty$, and as $\Ty$ is a groupoid, it trivially preserves
  2-paths.
  Having defined $\norm$ and $\comp$, we finished the construction for Problem \ref{prob:norm}.
\end{proof}

\begin{thm}[\href{https://csl26-cohtt.github.io/TT.Groupoid.NTy.html\#isSetTy}{\Agda}]\label{thm:main}
  $\Ty_\G\,\Gamma$ is a set.
\end{thm}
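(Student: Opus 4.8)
The plan is to read the theorem off directly from the normalisation data of Problem~\ref{prob:norm}. Fix a context $\Gamma : \Con_\G$. From Problem~\ref{prob:norm} we already have the normalisation function $\norm : \Ty\,\Gamma \to \NTy\,\Gamma$ together with its completeness proof $\comp : (A : \Ty\,\Gamma) \to \ulcorner\norm\,A\urcorner = A$. Read together, $\comp$ is precisely a pointwise homotopy $\ulcorner\blank\urcorner \circ \norm \sim \id_{\Ty\,\Gamma}$, so the triple $(\norm, \ulcorner\blank\urcorner, \comp)$ exhibits $\Ty\,\Gamma$ as a \emph{retract} of $\NTy\,\Gamma$, with $\norm$ the section and the quote map $\ulcorner\blank\urcorner$ the retraction.

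Now I would invoke two facts. First, by the preceding Lemma, $\NTy\,\Gamma$ is a set. Second, retracts preserve truncation levels, so that a retract of an $n$-type is again an $n$-type; this is the same principle already used in the proof of that Lemma, here specialised to $n = 0$ (see \cite{HoTTbook}). Unfolding it one level down: for any $A_0, A_1 : \Ty\,\Gamma$ the path type $A_0 = A_1$ is itself a retract of $\norm\,A_0 = \norm\,A_1$, where the section sends a path $e$ to $\norm\,e$ and the retraction sends a path $\N e$ to the composite $A_0 \overset{\comp\,A_0}{=} \ulcorner\norm\,A_0\urcorner \overset{\ulcorner\N e\urcorner}{=} \ulcorner\norm\,A_1\urcorner \overset{\comp\,A_1}{=} A_1$; the round-trip equals $e$ by naturality of the homotopy $\comp$. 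Since $\NTy\,\Gamma$ is a set, $\norm\,A_0 = \norm\,A_1$ is a proposition, hence so is its retract $A_0 = A_1$, and therefore $\Ty\,\Gamma$ is a set. As this argument is uniform in $\Gamma$, we conclude that $\Ty_\G$ is a set.

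I expect essentially no obstacle to remain at this stage: all of the genuine difficulty has already been discharged in the construction for Problem~\ref{prob:norm}, where $\norm$ and $\comp$ had to be defined so as to respect every path and 2-path constructor of the groupoid-syntax, in particular the coherence cells $\U[\circ]$, $\El[\circ]$, $\Pi[\circ]$, $\U[\id]$, $\El[\id]$ and $\Pi[\id]$. The structural reason this goes through — and what ultimately forces $\Ty_\G$ down to set level — is that $\norm$ lands in the \emph{set} $\NTy\,\Gamma$, so its obligations on 2-paths hold automatically, while $\comp$ produces equalities in the \emph{groupoid} $\Ty\,\Gamma$, so its obligations on 2-paths are likewise free. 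Granting those completed steps, the theorem itself is just the one-line retract argument above.
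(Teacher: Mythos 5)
Your proposal is correct and is essentially the paper's own proof: the paper likewise concludes from $\norm$ and $\comp$ that $\ulcorner\blank\urcorner$ is a retraction, and then uses the fact that retracts preserve h-levels together with the lemma that $\NTy\,\Gamma$ is a set. Your explicit unfolding of the retract argument to path types is just the standard proof of h-level preservation and matches the intended reasoning.
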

\begin{proof}
Together, $\norm$ and $\comp$ witness that $\ulcorner\blank\urcorner$ is a
  retraction, which preserves h-levels: as $\NTy\,\Gamma$ is a set, so is
  $\Ty\,\Gamma$.
\end{proof}
\begin{remark}
Stability of normalisation is also provable, but we do not need it in this
paper.
\end{remark}

\section{Reaping the fruits}\label{sec:fruits}

\begin{problem}[\href{https://csl26-cohtt.github.io/TT.Groupoid.IsoSet.html}{\Agda}]\label{prob:iso}
  The set-syntax is isomorphic to the groupoid-syntax.
\end{problem}
\begin{proof}[Construction]
  In Construction \ref{con:grp2setsyn}, we defined the map from the
  groupoid-syntax to the set-syntax. Now we define the opposite
  direction using that $\Ty_\G\,\Gamma$ is a set. The roundtrips are proven by
  two simple inductions.
\end{proof}

\begin{construction}[Set interpretation of the set-syntax
  \href{https://csl26-cohtt.github.io/TT.Set.SetInterp.html}{\Agda}]
    We compose the groupoid-interpretation of the set syntax (Problem
  \ref{prob:iso}) and the set interpretation of the groupoid-syntax
  (Construction \ref{con:grp2set}).
\end{construction}

Groupoid CwFs are essentially algebras of the substitution calculus
part of the groupoid-syntax (Definition \ref{def:groupoid}), but we
also include three coherence laws for types (the pentagon law
$[\mathsf{ass}]$ and two identity triangles).
\begin{defn}[Groupoid CwF, GCwF \href{https://csl26-cohtt.github.io/TT.Groupoid.CwF.html}{\Agda}]\label{def:groupidCwF}
  An Ehrhard-style groupoid CwF is a 1-category (objects named $\Con :
  \Type$, morphisms $\Sub : \Con\to\Con\to\Set$), a 2-presheaf of
  types (given by $\Ty : \Con\to\mathsf{Groupoid}$, $\blank[\blank] :
  \Ty\,\Gamma\to\Sub\,\Delta\,\Gamma\to\Ty\,\Delta$, $[\circ] :
  A[\gamma\circ\delta] = A[\gamma][\delta]$, $[\id] : A[\id] = A$,
  $[\mathsf{ass}]$, $[\mathsf{idl}]$, $[\mathsf{idr}]$ as depicted
  below), a dependent presheaf of terms over types ($\Tm :
  (\Gamma:\Con)\to\Ty\,\Gamma\to\Set$, with instantiation and functor
  laws), with Ehrhard-style comprehension (operations
  $\blank\ext\blank$, $\blank^+$, $\p$, $\q$, $\langle\blank\rangle$
  with 8 equations as in Definition \ref{def:wild}).
\begin{alignat*}{10}
  & [\mathsf{ass}] : \forall A\,\gamma\,\delta\,\theta\ldotp && [\mathsf{idl}] : \forall A\,\gamma\ldotp && [\mathsf{idr}] : \forall A\,\gamma\ldotp \\
  &
  \begin{tikzcd}[ampersand replacement=\&]
	{A[\gamma\circ(\delta\circ\theta)]} \& {A[(\gamma\circ\delta)\circ\theta]} \\
	\& {A[\gamma\circ\delta][\theta]} \\
	{A[\gamma][\delta\circ\theta]} \& {A[\gamma][\delta][\theta]}
	\arrow["{\mathsf{ass}\,\gamma\,\delta\,\theta}", from=1-1, to=1-2]
	\arrow["{[\circ]\,A\,\gamma\,(\delta\circ\theta)}"', from=1-1, to=3-1]
	\arrow["{[\circ]\,A\,(\gamma\circ\delta)\,\theta}"', from=1-2, to=2-2]
	\arrow["{[\circ]\,A\,\gamma\,\delta}"', from=2-2, to=3-2]
	\arrow["{[\circ]\,(A[\gamma])\,\delta\,\theta}"', from=3-1, to=3-2]
  \end{tikzcd}
  &&
  \begin{tikzcd}[ampersand replacement=\&]
	{A[\id\circ\gamma]} \\
	{A[\id][\gamma]} \& {A[\gamma]} \\
	{}
	\arrow["{[\circ]\,A\,\id\,\gamma}"', from=1-1, to=2-1]
	\arrow["{\mathsf{idl}\,\gamma}", from=1-1, to=2-2]
	\arrow["{[\id]\,A}"', from=2-1, to=2-2]
  \end{tikzcd}
  &&
  \begin{tikzcd}[ampersand replacement=\&]
	{A[\gamma\circ\id]} \\
	{A[\gamma][\id]} \& {A[\gamma]} \\
	{}
	\arrow["{[\circ]\,A\,\gamma\,\id}"', from=1-1, to=2-1]
	\arrow["{\mathsf{idr}\,\gamma}", from=1-1, to=2-2]
	\arrow["{[\id]\,(A[\gamma])}"', from=2-1, to=2-2]
  \end{tikzcd}
\end{alignat*}
\end{defn}
\begin{remark}[\href{https://csl26-cohtt.github.io/TT.Groupoid.CwF.html\#\%5B\%5D\%E1\%B5\%80-idr}{\Agda}]\label{remark}
  In any groupoid CwF, $[\mathsf{idl}]$ and $[\mathsf{idr}]$ are interderivable.
  The direction $[\mathsf{idl}] \to [\mathsf{idr}]$ is described in
  Figure \ref{fig:idlidr} in the appendix. The same proof in the
  context of monoidal categories appears in \cite[Theorem 7]{KELLY1964397}.
\end{remark}
\begin{prop}[\href{https://csl26-cohtt.github.io/TT.Groupoid.NTy.html\#Coh}{\Agda}]
  In the groupoid-syntax (Definition \ref{def:groupoid}), the laws $[\mathsf{ass}]$,
  $[\mathsf{idl}]$ and $[\mathsf{idr}]$ are admissible.
\end{prop}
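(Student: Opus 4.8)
The plan is to derive all three laws directly from the main theorem, with essentially no additional work. First I would observe, by inspecting the statements in Definition~\ref{def:groupidCwF}, that each of $[\mathsf{ass}]$, $[\mathsf{idl}]$ and $[\mathsf{idr}]$ is an equation between two parallel $1$-paths (equalities) living in a single type $\Ty_\G\,\Gamma$. Concretely, $[\mathsf{ass}]$ asserts the equality of two composite paths from $A[\gamma\circ(\delta\circ\theta)]$ to $A[\gamma][\delta][\theta]$, $[\mathsf{idl}]$ of two paths from $A[\id\circ\gamma]$ to $A[\gamma]$, and $[\mathsf{idr}]$ of two paths from $A[\gamma\circ\id]$ to $A[\gamma]$. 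In every case both endpoints are objects of $\Ty_\G\,\Gamma$ and both sides are equalities in that type, so each law is precisely an inhabitant of a type of the form $p =_{A_0 =_{\Ty_\G\,\Gamma} A_1} q$.

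Next I would invoke Theorem~\ref{thm:main}, which states that $\Ty_\G$ is a set. Being a set means being $0$-truncated, so for any $\Gamma$ and any $A_0, A_1 : \Ty_\G\,\Gamma$ the identity type $A_0 =_{\Ty_\G\,\Gamma} A_1$ is a proposition. Consequently any two parallel paths between types are automatically equal, which is exactly the content of each of the three coherence laws. Feeding the relevant endpoints together with the two composite paths into the proof that $A_0 =_{\Ty_\G\,\Gamma} A_1$ is a proposition yields $[\mathsf{ass}]$, $[\mathsf{idl}]$ and $[\mathsf{idr}]$ uniformly.

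There is no genuine obstacle left at this stage: all of the difficulty has already been absorbed into the $\alpha$-normalisation construction (Problems~\ref{prob:norm} and~\ref{prob:inst}) used to prove Theorem~\ref{thm:main}, so this proposition is an immediate corollary. I note that one could instead try to establish the three diagrams by hand from the groupoid-syntax constructors, which would require constructing explicit $2$-dimensional fillers analogous to $\U[\circ]$, $\El[\circ]$ and $\Pi[\circ]$; by Remark~\ref{remark} one would then only need to treat $[\mathsf{ass}]$ together with one of $[\mathsf{idl}]$, $[\mathsf{idr}]$ separately, since those two identity laws are interderivable. But this direct route is unnecessary once set-truncation of $\Ty_\G$ is available, and I would therefore simply appeal to Theorem~\ref{thm:main}.
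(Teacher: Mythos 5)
Your proposal is correct and matches the paper's proof exactly: the paper also derives all three laws as a direct consequence of Theorem~\ref{thm:main}, since each is an equation between parallel paths in $\Ty_\G\,\Gamma$, which is a set. No further comparison is needed.
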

\begin{proof}
  Direct consequence of Theorem \ref{thm:main}.
\end{proof}

\section{Conclusions}\label{sec:conclusion}

We have presented a basic coherence theorem for \textbf{GCwF},
enabling the interpretation of the usual decidable intrinsic syntax of
type theory within models based on categories where the objects do not
form a set, such as the set model. Notably, we have achieved this
without relying on normalisation for the groupoid syntax or invoking
Hedberg's theorem. Furthermore, our method is adaptable, in principle,
to type theories without decidable equality. An interesting feature of
our approach is that it eliminates the need to explicitly incorporate
the usual coherence laws for 2-categories (such as the pentagon law)
into the syntax; these laws are admissible in our groupoid-syntax.

Despite these advancements, several significant challenges remain. For
instance, we aim to extend this framework to include a univalent
universe of propositions (i.e. $\mathsf{Prop}$ with propositional
extensionality). We also seek to address univalence for types without
introducing an additional universe, thereby demonstrating that
univalence can be soundly supported in this setting.

The addition of universes, even a minimal one such as a universe of
Booleans with large eliminations, would require a shift in our
methodology and might necessitate term normalisation. Extending the
framework to accommodate multiple universes would inevitably demand a
move to higher dimensions, introducing further complexity.

Our groupoid-syntax can be seen as the GCwF with $\Pi$ freely
generated from a set and a family over it. We would like to support
more interesting generating data, i.e.\ generating data which can
refer to the GCwF structure while being defined.

Finally, we would like to revisit the longstanding problem of modeling
semi-simplicial types within this context. One potential direction is
to extend our current recursive treatment of substitution and
substitution-related coherence laws, using these as a foundation to
systematically derive higher coherence conditions.

\bibliography{p}

\appendix

\section{More diagrams}


\begin{figure*}[h!]
  \adjustbox{minipage={\textwidth},scale=0.865,left}{
  \[
  \begin{tikzcd}[ampersand replacement=\&,column sep=huge]
	{\ulcorner\N\El\,\hat{A}\urcorner[\gamma\circ\delta]} \&\&\& {\ulcorner\N\El\,\hat{A}\urcorner[\gamma][\delta]} \\
	\& {\El\,\hat{A}[\gamma\circ\delta]} \& {(\El\,\hat{A})[\gamma][\delta]} \\
	\&\& {(\El\,(\hat{A}[\gamma]^\U))[\delta]} \& {\ulcorner(\N\El\,\hat{A})\N[\gamma\N]\urcorner[\delta]} \\
	\& {\El\,(\hat{A}[\gamma\circ\delta]^\U)} \& {\El\,(\hat{A}[\gamma]^\U[\delta]^\U)} \\
	{\ulcorner(\N\El\,\hat{A})\N[\gamma\circ\delta\N]\urcorner} \&\&\& {\ulcorner(\N\El\,\hat{A})\N[\gamma\N]\N[\delta\N]\urcorner}
	\arrow[""{name=0, anchor=center, inner sep=0}, "{[\circ]\,\ulcorner\N\El\,\hat{A}\urcorner\,\gamma\,\delta}", from=1-1, to=1-4]
	\arrow[equals, from=1-1, to=2-2]
	\arrow[""{name=1, anchor=center, inner sep=0}, "{\ulcorner\urcorner[]\,(\N\El\,\hat{A})\,(\gamma\circ\delta)}"', from=1-1, to=5-1]
	\arrow[""{name=2, anchor=center, inner sep=0}, "{\ulcorner\urcorner[]\,(\N\El\,\hat{A})\,\gamma}", from=1-4, to=3-4]
	\arrow[""{name=3, anchor=center, inner sep=0}, "{\raisebox{0.5em}{$[\circ]\,(\El\,\hat{A})\,\gamma\,\delta$}}", from=2-2, to=2-3]
	\arrow[""{name=4, anchor=center, inner sep=0}, "{\El[]\,\hat{A}\,(\gamma\circ\delta)}"', from=2-2, to=4-2]
	\arrow[equals, from=2-3, to=1-4]
	\arrow[""{name=5, anchor=center, inner sep=0}, "{\El[]\,\hat{A}\,\gamma}", from=2-3, to=3-3]
	\arrow[equals, from=3-3, to=3-4]
	\arrow[""{name=6, anchor=center, inner sep=0}, "{\El[]\,(\hat{A}[\gamma]^\U)\,\delta}", from=3-3, to=4-3]
	\arrow[""{name=7, anchor=center, inner sep=0}, "{\ulcorner\urcorner[]\,((\N\El\,\hat{A})\N[\gamma\N])\,\delta}", from=3-4, to=5-4]
	\arrow[""{name=8, anchor=center, inner sep=0}, "{[{\circ}]^\U\,\hat{A}\,\gamma\,\delta}"', from=4-2, to=4-3]
	\arrow[equals, from=4-3, to=5-4]
	\arrow[equals, from=5-1, to=4-2]
	\arrow[""{name=9, anchor=center, inner sep=0}, "{\N[\circ\N]\,(\N\El\,\hat{A})\,\gamma\,\delta}"', from=5-1, to=5-4]
	\arrow[shorten <=10pt, shorten >=20pt, equals, from=0, to=3]
	\arrow[shorten <=40pt, shorten >=65pt, equals, from=1, to=4]
	\arrow[shorten <=50pt, shorten >=58pt, equals, from=2, to=5]
	\arrow["{\El[\circ]\,\hat{A}\,\gamma\,\delta}"{description}, draw=none, from=3, to=8]
	\arrow[shorten <=52pt, shorten >=56pt, equals, from=7, to=6]
	\arrow[shorten <=10pt, shorten >=20pt, equals, from=9, to=8]
  \end{tikzcd}
  \]
  }
  \caption{This diagram is the proof
    $\ulcorner\urcorner[\circ]\,(\N\El\,\hat{A})\,\gamma\,\delta$, which
    is the outer square. Double lines mean definitional equality. The
    boundaries of the outer square are definitionally equal to the
    boundaries of the inner square, which we fill by
    $\El[\circ]\,\hat{A}\,\gamma\,\delta$.}\label{fig:El}
\end{figure*}

\begin{figure*}
\adjustbox{minipage={\textwidth},scale=0.8,left}{
\[
\begin{tikzcd}[ampersand replacement=\&,column sep=small]
	{\ulcorner\N\Pi\,\N A\,\N B\urcorner[\gamma\circ\delta]} \& {\ulcorner\N\Pi\,\N A\,\N B\urcorner[\gamma][\delta]} \&\& {(\Pi\,\ulcorner\N A\urcorner\,\ulcorner\N B\urcorner)[\gamma\circ\delta]} \& {(\Pi\,\ulcorner\N A\urcorner\,\ulcorner\N B\urcorner)[\gamma][\delta]} \\
	\&\&\&\& {(\Pi\,(\ulcorner\N A\urcorner[\gamma])\,(\ulcorner\N B\urcorner[\gamma^+]))[\delta]} \\
	\&\&\& {\Pi\,(\ulcorner\N A\urcorner[\gamma\circ\delta])\,(\ulcorner\N B\urcorner[(\gamma\circ\delta)^+])} \& {(\Pi\,(\ulcorner\N A\urcorner[\gamma])\,\ulcorner\N B\N[\gamma^+\N]\urcorner)[\delta]} \\
	\& {\ulcorner(\N\Pi\,\N A\,\N B)\N[\gamma\N]\urcorner[\delta]} \& \equiv \&\& {(\Pi\,\ulcorner\N A\N[\gamma\N]\urcorner\,\ulcorner\N B\N[\gamma^{\ulcorner+\urcorner}\N]\urcorner)[\delta]} \\
	\&\&\& {\Pi\,(\ulcorner\N A\urcorner[\gamma\circ\delta])\,\ulcorner\N B\N[(\gamma\circ\delta)^+\N]\urcorner} \& {\Pi\,(\ulcorner\N A\N[\gamma\N]\urcorner[\delta])\,(\ulcorner\N B\N[\gamma^{\ulcorner+\urcorner}\N]\urcorner[\delta^+])} \\
	\&\&\&\& {\Pi\,(\ulcorner\N A\N[\gamma\N ]\urcorner[\delta])\,\ulcorner\N B\N[\gamma^{\ulcorner+\urcorner}\N]\N[\delta^+\N]\urcorner} \\
	{\ulcorner(\N\Pi\,\N A\,\N B)\N[\gamma\circ\delta\N]\urcorner} \& {\ulcorner(\N\Pi\,\N A\,\N B)\N[\gamma\N]\N[\delta\N]\urcorner} \&\& {\Pi\,\ulcorner\N A\N[\gamma\circ\delta\N]\urcorner\,\ulcorner\N B\N[(\gamma\circ\delta)^{\ulcorner+\urcorner}\N]\urcorner} \& {\Pi\,\ulcorner\N A\N[\gamma\N]\N[\delta\N]\urcorner\,\ulcorner\N B\N[\gamma^{\ulcorner+\urcorner}\N]\N[\delta^{\ulcorner+\urcorner}\N]\urcorner}
	\arrow["{\raisebox{0.5em}{$[\circ]\,\ulcorner\N\Pi\,\N A\,\N B\urcorner\,\gamma\,\delta$}}", from=1-1, to=1-2]
	\arrow["{\ulcorner\urcorner[]\,(\N\Pi\,\N A\,\N B)\,(\gamma\circ\delta)}"{pos=0.4}, from=1-1, to=7-1]
	\arrow["{\ulcorner\urcorner[]\,(\N\Pi\,\N A\,\N B)\,\gamma}"', from=1-2, to=4-2]
	\arrow["{[\circ]\,(\Pi\,\ulcorner\N A\urcorner\,\ulcorner\N B\urcorner)\,\gamma\,\delta}", from=1-4, to=1-5]
	\arrow["{\Pi[]\,\ulcorner\N A\urcorner\,\ulcorner\N B\urcorner\,(\gamma\circ\delta)}"', from=1-4, to=3-4]
	\arrow["{\Pi[]\,\ulcorner\N A\urcorner\,\ulcorner\N B\urcorner\,\gamma}"', from=1-5, to=2-5]
	\arrow["{\ulcorner\urcorner[]\,\N B\,\gamma^+}"', from=2-5, to=3-5]
	\arrow["{\ulcorner\urcorner[]\,\N B\,((\gamma\circ\delta)^+)}"', from=3-4, to=5-4]
	\arrow["{\Pi\,(\ulcorner\urcorner[]\,\N A\,\gamma)\,\ulcorner\N B\N[\gamma^{\ulcorner+\urcorner}\N]\mathsf{filler}\urcorner}"', from=3-5, to=4-5]
	\arrow["{\ulcorner\urcorner[]\,((\N\Pi\,\N A\,\N B)\N[\gamma\N])\,\delta}"', from=4-2, to=7-2]
	\arrow["{\Pi[]\,\ulcorner\N A\N[\gamma\N]\urcorner\,\ulcorner\N B\N[\gamma^{\ulcorner+\urcorner}\N]\urcorner\,\delta}"', from=4-5, to=5-5]
	\arrow["{\Pi\,(\ulcorner\urcorner[]\,\N A\,(\gamma\circ\delta))\,\ulcorner\N B\N[(\gamma\circ\delta)^{\ulcorner+\urcorner}\N]\mathsf{filler}\urcorner}"', from=5-4, to=7-4]
	\arrow["{\ulcorner\urcorner[]\,(\N B\N[\gamma^{\ulcorner+\urcorner}\N])\,\delta^+}"', from=5-5, to=6-5]
	\arrow["{\Pi\,(\ulcorner\urcorner[]\,(\N A\N[\gamma\N])\,\delta)\,\ulcorner\N B\N[\gamma^{\ulcorner+\urcorner}\N]\N[\delta^{\ulcorner+\urcorner}\N]\mathsf{filler}\urcorner}"', from=6-5, to=7-5]
	\arrow["{\raisebox{-1em}{$\N[\circ\N]\,(\N\Pi\,\N A\,\N B)\,\gamma\,\delta$}}"', from=7-1, to=7-2]
	\arrow["{\raisebox{-1em}{$\Pi\,\ulcorner\N[{\circ}\N]\,\N A\,\gamma\,\delta\urcorner\,\ulcorner\N[{\circ}^{\ulcorner+\urcorner}\N]\,\N B\,\gamma\,\delta\urcorner$}}"', from=7-4, to=7-5]
\end{tikzcd}
\]
}

\adjustbox{minipage={\textwidth},scale=0.73,left}{
\[
\begin{tikzcd}[ampersand replacement=\&, column sep=tiny]
	{(\Pi\,\ulcorner\N A\urcorner\,\ulcorner\N B\urcorner)[\gamma\circ\delta]} \&\&\& {(\Pi\,\ulcorner\N A\urcorner\,\ulcorner\N B\urcorner)[\gamma][\delta]} \\
	\&\&\& {(\Pi\,(\ulcorner\N A\urcorner[\gamma])\,(\ulcorner\N B\urcorner[\gamma^+]))[\delta]} \\
	{\Pi\,(\ulcorner\N A\urcorner[\gamma\circ\delta])\,(\ulcorner\N B\urcorner[(\gamma\circ\delta)^+])} \&\& {\Pi\,(\ulcorner\N A\urcorner[\gamma][\delta])\,(\ulcorner\N B\urcorner[\gamma^+][\delta^+])} \\
	\& {\Pi\,(\ulcorner\N A\urcorner[\gamma][\delta])\,(\ulcorner\N B\urcorner[\gamma^+\circ\delta^+])} \&\& {(\Pi\,(\ulcorner\N A\urcorner[\gamma])\,\ulcorner\N B\N[\gamma^+\N]\urcorner)[\delta]} \\
	\&\& {\Pi\,(\ulcorner\N A\urcorner[\gamma][\delta])\,(\ulcorner\N B\N[\gamma^+\N]\urcorner[\delta^+])} \& {(\Pi\,\ulcorner\N A\N[\gamma\N]\urcorner\,\ulcorner\N B\N[\gamma^{\ulcorner+\urcorner}\N]\urcorner)[\delta]} \\
	\& {\Pi\,(\ulcorner\N A\urcorner[\gamma][\delta])\,\ulcorner\N B\N[\gamma^+\circ\delta^+\N]\urcorner} \&\& {\Pi\,(\ulcorner\N A\N[\gamma\N]\urcorner[\delta])\,(\ulcorner\N B\N[\gamma^{\ulcorner+\urcorner}\N]\urcorner[\delta^+])} \\
	{\Pi\,(\ulcorner\N A\urcorner[\gamma\circ\delta])\,\ulcorner\N B\N[(\gamma\circ\delta)^+\N]\urcorner} \&\& {\Pi\,(\ulcorner\N A\urcorner[\gamma][\delta])\,\ulcorner\N B\N[\gamma^+\N]\N[\delta^+\N]\urcorner} \\
	\&\&\& {\Pi\,(\ulcorner\N A\N[\gamma\N]\urcorner[\delta])\,\ulcorner\N B\N[\gamma^{\ulcorner+\urcorner}\N]\N[\delta^+\N]\urcorner} \\
	{\Pi\,\ulcorner\N A\N[\gamma\circ\delta\N]\urcorner\,\ulcorner\N B\N[(\gamma\circ\delta)^{\ulcorner+\urcorner}\N]\urcorner} \&\&\& {\Pi\,\ulcorner\N A\N[\gamma\N]\N[\delta\N]\urcorner\,\ulcorner\N B\N[\gamma^{\ulcorner+\urcorner}\N]\N[\delta^{\ulcorner+\urcorner}\N]\urcorner}
	\arrow["{[\circ]\,(\Pi\,\ulcorner\N A\urcorner\,\ulcorner\N B\urcorner)\,\gamma\,\delta}", from=1-1, to=1-4]
	\arrow["{\Pi[]\,\ulcorner\N A\urcorner\,\ulcorner\N B\urcorner\,(\gamma\circ\delta)}"', from=1-1, to=3-1]
	\arrow["{\Pi[]\,\ulcorner\N A\urcorner\,\ulcorner\N B\urcorner\,\gamma}", from=1-4, to=2-4]
	\arrow["{\Pi[]\,(\ulcorner\N A\urcorner[\gamma])\,(\ulcorner\N B\urcorner[\gamma^+])\,\delta}"', from=2-4, to=3-3]
	\arrow["{\ulcorner\urcorner[]\,\N B\,\gamma^+}", from=2-4, to=4-4]
	\arrow["{\mathrm{nat}}"{description}, draw=none, from=2-4, to=5-3]
	\arrow["{\Pi[\circ]\,\ulcorner\N A\urcorner\,\ulcorner\N B\urcorner\,\gamma\,\delta}"{description}, draw=none, from=3-1, to=1-4]
	\arrow[""{name=0, anchor=center, inner sep=0}, "{\Pi\,([\circ]\,\ulcorner\N A\urcorner\,\gamma\,\delta)\,([\circ^+]\,\ulcorner\N B\urcorner\,\gamma\,\delta)}", from=3-1, to=3-3]
	\arrow["\begin{array}{c} \substack{\Pi\,([\circ]\,\ulcorner\N A\urcorner\,\gamma\,\delta)\\(\ulcorner\N B\urcorner[{\circ}^+\,\gamma\,\delta])} \end{array}"', from=3-1, to=4-2]
	\arrow[""{name=1, anchor=center, inner sep=0}, "{\ulcorner\urcorner[]\,\N B\,((\gamma\circ\delta)^+)}"', from=3-1, to=7-1]
	\arrow["{\ulcorner\urcorner[]\,\N B\,\gamma^+}"{description}, from=3-3, to=5-3]
	\arrow["{[\circ]\,\ulcorner\N B\urcorner\,\gamma^+\,\delta^+}"', from=4-2, to=3-3]
	\arrow[""{name=2, anchor=center, inner sep=0}, "{\ulcorner\urcorner[]\,\N B\,(\gamma^+\circ\delta^+)}"{description}, from=4-2, to=6-2]
	\arrow["{\Pi[]\,(\ulcorner\N A\urcorner[\gamma])\,\ulcorner\N B\N[\gamma^+\N]\urcorner\,\delta}"{description}, from=4-4, to=5-3]
	\arrow["\begin{array}{c} \substack{(\Pi\,(\ulcorner\urcorner[]\,\N A\,\gamma)\qquad\\\ulcorner\N B\N[\gamma^{\ulcorner+\urcorner}\N]\mathsf{filler}\urcorner)[\delta]} \end{array}", from=4-4, to=5-4]
	\arrow["{\mathrm{nat}}"{description}, draw=none, from=5-3, to=5-4]
	\arrow["{\Pi\,((\ulcorner\urcorner[]\,\N A\,\gamma)[\delta])\,(\ulcorner\N B\N[\gamma^{\ulcorner+\urcorner}\N]\mathsf{filler}\urcorner[\delta^+])}"{description}, from=5-3, to=6-4]
	\arrow["{\ulcorner\urcorner[]\,(\N B\N[\gamma^+\N])\,\delta^+}"{description}, from=5-3, to=7-3]
	\arrow["{\mathrm{nat}}"{description}, draw=none, from=5-3, to=8-4]
	\arrow["\begin{array}{c} \substack{\Pi[]\,\ulcorner\N A\N[\gamma\N]\urcorner\\\ulcorner\N B\N[\gamma^{\ulcorner+\urcorner}\N]\urcorner\,\delta} \end{array}"{pos=0.4}, from=5-4, to=6-4]
	\arrow["{\N[\circ\N]\,\N B\,\gamma^+\,\delta^+}", from=6-2, to=7-3]
	\arrow["{\ulcorner\urcorner[]\,(\N B\N[\gamma^{\ulcorner+\urcorner}\N])\,\delta^+}", from=6-4, to=8-4]
	\arrow["\begin{array}{c} \substack{\Pi\,([\circ]\,\ulcorner\N A\urcorner\,\gamma\,\delta)\\\ulcorner\N B\N[\circ^+\,\gamma\,\delta\N]\urcorner} \end{array}", from=7-1, to=6-2]
	\arrow[""{name=3, anchor=center, inner sep=0}, "{\Pi\,([\circ]\,\ulcorner\N A\urcorner\,\gamma\,\delta)\,\ulcorner\N[\circ^+\N]\,\N B\,\gamma\,\delta\urcorner}"', from=7-1, to=7-3]
	\arrow["{\Pi\,(\ulcorner\urcorner[]\,\N A\,(\gamma\circ\delta))\,\ulcorner\N B\N[(\gamma\circ\delta)^{\ulcorner+\urcorner}\N]\mathsf{filler}\urcorner}", from=7-1, to=9-1]
	\arrow["{\Pi\,(\ulcorner\urcorner[\circ]\,\N A\,\gamma\,\delta)\,\ulcorner\N[\circ^{\ulcorner+\urcorner}\N]\mathsf{filler}\,\N B\,\gamma\,\delta\urcorner}"{description}, draw=none, from=7-1, to=9-4]
	\arrow["{\Pi\,((\ulcorner\urcorner[]\,\N A\,\gamma)[\delta])\,\ulcorner\N B\N[\gamma^{\ulcorner+\urcorner}\N]\mathsf{filler}\N[\delta^+\N]\urcorner}"', from=7-3, to=8-4]
	\arrow["{\Pi\,(\ulcorner\urcorner[]\,(\N A\N[\gamma\N])\,\delta)\,\ulcorner\N B\N[\gamma^{\ulcorner+\urcorner}\N]\N[\delta^{\ulcorner+\urcorner}\N]\mathsf{filler}\urcorner}"', from=8-4, to=9-4]
	\arrow["{\Pi\,\ulcorner\N[\circ\N]\,\N A\,\gamma\,\delta\urcorner\,\ulcorner\N[\circ^{\ulcorner+\urcorner}\N]\,\N B\,\gamma\,\delta\urcorner}"', from=9-1, to=9-4]
	\arrow["{\Pi\,([\circ]\,\ulcorner\N A\urcorner\,\gamma\,\delta)\,([\circ^+]\mathsf{filler}\,\ulcorner\N B\urcorner\,\gamma\,\delta)}"{description}, draw=none, from=0, to=4-2]
	\arrow["{\mathrm{nat}}"{description}, draw=none, from=1, to=2]
	\arrow["{\ulcorner\urcorner[\circ]\,\N B\,\gamma\,\delta}"{description, pos=0.7}, draw=none, from=2, to=5-3]
	\arrow["{\Pi\,([\circ]\,\ulcorner\N A\urcorner\,\gamma\,\delta)\,\ulcorner\N[\circ^+\N]\mathsf{filler}\,\N B\,\gamma\,\delta\urcorner}"{description}, draw=none, from=6-2, to=3]
\end{tikzcd}
\]
}
  \caption{This diagram is the proof
    $\ulcorner\urcorner[\circ]\,(\N\Pi\,A\,B)\,\gamma\,\delta$. In the
    upper part, we compute the square to be filled: the left hand side
    square is definitionally equal to the right hand side one. Then,
    we fill the right hand side square in the lower diagram, where the
    boundary of the square is the same as the upper right hand side
    square.}\label{fig:picomp}
\end{figure*}

\begin{figure*}
  \[
\begin{tikzcd}[ampersand replacement=\&]
	{\ulcorner\N \Pi\,\N A\,\N B\urcorner[\id]} \&\&\& {(\Pi\,\ulcorner\N A\urcorner\,\ulcorner\N B\urcorner)[\id]} \\
	\&\&\& {\Pi\,(\ulcorner\N A\urcorner[\id])\,(\ulcorner\N B\urcorner[\id^+])} \\
	\& {\ulcorner\N \Pi\,\N A\,\N B\urcorner} \&\& {\Pi\,(\ulcorner\N A\urcorner[\id])\,\ulcorner\N B\N[\id^+\N]\urcorner} \& {\Pi\,\ulcorner\N A\urcorner\,\ulcorner\N B\urcorner} \\
	{\ulcorner\N \Pi\,\N A\,\N B\N[\id\N]\urcorner} \&\&\& {\Pi\,\ulcorner\N A\N[\id\N]\urcorner\,\ulcorner(\N B\N[\id^{\ulcorner+\urcorner}\N])\urcorner}
	\arrow[""{name=0, anchor=center, inner sep=0}, "{[\id]\,\ulcorner\N \Pi\,\N A\,\N B\urcorner}", from=1-1, to=3-2]
	\arrow["{\ulcorner\urcorner[]\,(\N \Pi\,\N A\,\N B)\,\id}"', from=1-1, to=4-1]
	\arrow["{\Pi[]\,\ulcorner\N A\urcorner\,\ulcorner\N B\urcorner\,\id}"', from=1-4, to=2-4]
	\arrow["{[\id]\,(\Pi\,\ulcorner\N A\urcorner\,\ulcorner\N B\urcorner)}", from=1-4, to=3-5]
	\arrow[""{name=1, anchor=center, inner sep=0}, "{\ulcorner\urcorner[]\,\N B\,(\id^+)}"', from=2-4, to=3-4]
	\arrow["{\Pi\,(\ulcorner\urcorner[]\,\N A\,\id)\,(\ulcorner\N B\N[\id^{\ulcorner+\urcorner}\N]\urcorner)}"', from=3-4, to=4-4]
	\arrow["{\N[\id\N]\,(\N \Pi\,\N A\,\N B)}"', from=4-1, to=3-2]
	\arrow["{\Pi\,\ulcorner\N[\id\N]\,\N A\urcorner\,\ulcorner\N[\id^{\ulcorner+\urcorner}\N]\,\N B\urcorner}"', from=4-4, to=3-5]
	\arrow["\equiv"{description}, draw=none, from=0, to=1]
\end{tikzcd}
\]

  \[\begin{tikzcd}[column sep=huge]
    {(\Pi\,\ulcorner\N A\urcorner\,\ulcorner\N B\urcorner)[\id]} \\
    \\
    {\Pi\,(\ulcorner\N A\urcorner[\id])\,(\ulcorner\N B\urcorner[\id^+])} && {\Pi\,\ulcorner\N A\urcorner\,\ulcorner\N B\urcorner} \\
    & {\Pi\,\ulcorner\N A\urcorner\,(\ulcorner\N B\urcorner[\id])} \\
    & {\Pi\,\ulcorner\N A\urcorner\,\ulcorner\N B\N[\id\N]\urcorner} \\
    {\Pi\,(\ulcorner\N A\urcorner[\id])\,\ulcorner\N B\N[\id^+\N]\urcorner} && {\Pi\,\ulcorner\N A\urcorner\,\ulcorner\N B\urcorner} \\
    \\
    {\Pi\,\ulcorner\N A\N[\id\N]\urcorner\,\ulcorner(\N B\N[\id^{\ulcorner+\urcorner}\N])\urcorner}
    \arrow["{\Pi[]\,\ulcorner\N A\urcorner\,\ulcorner\N B\urcorner\,\id}"', from=1-1, to=3-1]
    \arrow[""{name=0, anchor=center, inner sep=0}, "{[\id]\,(\Pi\,\ulcorner\N A\urcorner\,\ulcorner\N B\urcorner)}", from=1-1, to=3-3]
    \arrow[""{name=1, anchor=center, inner sep=0}, "{\Pi\,([\id]\,\ulcorner\N A\urcorner)\,([\id^+]\,\ulcorner\N B\urcorner)}"{description}, from=3-1, to=3-3]
    \arrow["\begin{array}{c} \substack{\Pi\,([\id]\,\ulcorner\N A\urcorner)\\(\ulcorner\N B\urcorner[\id^+])} \end{array}"', from=3-1, to=4-2]
    \arrow[""{name=2, anchor=center, inner sep=0}, "{\ulcorner\urcorner[]\,\N B\,(\id^+)}"', from=3-1, to=6-1]
    \arrow[""{name=3, anchor=center, inner sep=0}, equals, from=3-3, to=6-3]
    \arrow["{[\id]\,\ulcorner\N B\urcorner}"', from=4-2, to=3-3]
    \arrow[""{name=4, anchor=center, inner sep=0}, "{\ulcorner\urcorner[]\,\N B\,\id}"', from=4-2, to=5-2]
    \arrow["{\N[\id\N]\,\N B}", from=5-2, to=6-3]
    \arrow["\begin{array}{c} \substack{\Pi\,([\id]\,\ulcorner\N A\urcorner)\\\ulcorner\N B\N[\id^+\N]\urcorner} \end{array}", from=6-1, to=5-2]
    \arrow[""{name=5, anchor=center, inner sep=0}, "{\Pi\,([\id]\,\ulcorner\N A\urcorner)\,\ulcorner\N [\id^+\N]\,\N B\urcorner}"{description}, from=6-1, to=6-3]
    \arrow["{\Pi\,(\ulcorner\urcorner[]\,\N A\,\id)\,(\ulcorner\N B\N[\id^{\ulcorner+\urcorner}\N]\urcorner)}"', from=6-1, to=8-1]
    \arrow[""{name=6, anchor=center, inner sep=0}, "{\Pi\,\ulcorner\N[\id\N]\,\N A\urcorner\,\ulcorner\N[\id^{\ulcorner+\urcorner}\N]\,\N B\urcorner}"', from=8-1, to=6-3]
    \arrow["{\Pi[\id]\,\ulcorner\N A\urcorner\,\ulcorner\N B\urcorner}"{description}, draw=none, from=3-1, to=0]
    \arrow["{\Pi\,([\id]\,\ulcorner\N A\urcorner)\,([\id^+]\mathsf{filler}\,\ulcorner\N B\urcorner)}"{description}, draw=none, from=1, to=4-2]
    \arrow["{\mathrm{nat}}"{description}, draw=none, from=2, to=4]
    \arrow["{\ulcorner\urcorner[\id]\,\N B}"{description}, draw=none, from=4, to=3]
    \arrow["{\Pi\,([\id]\,\ulcorner\N A\urcorner)\,\ulcorner\N[\id^+\N]\mathsf{filler}\,\N B\urcorner}"{description}, draw=none, from=5-2, to=5]
    \arrow["{\Pi\,(\ulcorner\urcorner[\id]\,\N A)\,\ulcorner\N[\id^{\ulcorner+\urcorner}\N]\mathsf{filler}\,\N B\urcorner}"{description}, draw=none, from=6-1, to=6]
  \end{tikzcd}\]
  \caption{This diagram is the proof
    $\ulcorner\urcorner[\id]\,(\N\Pi\,A\,B)$. In the upper part, we
    compute the triangle to be filled: the left hand side triangle is
    definitionally equal to the right hand side one. Then, we fill the
    right hand side triangle in the lower diagram, where we duplicate
    the vertex $\Pi\,\ulcorner\N A\urcorner\,\ulcorner\N B\urcorner$
    for readability.}\label{fig:piid}
\end{figure*}

\begin{figure*}
  \[
\begin{tikzcd}[ampersand replacement=\&]
	{((\comp\,A)_*\,(\norm\,B))\N[\gamma^+\N]} \& {(\norm\,B)\N[\gamma^+\N]} \\
	{((\comp\,A)_*\,(\norm\,B))\N[\gamma^{\ulcorner+\urcorner}\N]} \& {(\comp\,(A[\gamma]))_*\,((\norm\,B)\N[\gamma^+\N])} \\
	{\ulcorner\norm\,A\urcorner[\gamma]} \& {A[\gamma]} \\
	{\ulcorner(\norm\,A)\N[\gamma\N]\urcorner} \& {\ulcorner(\norm\,A)\N[\gamma\N]\urcorner}
	\arrow["{\mathrm{transportFiller}}", from=1-1, to=1-2]
	\arrow[""{name=0, anchor=center, inner sep=0}, "{((\comp\,A)_*\,(\norm\,B))\N[\gamma^{\ulcorner+\urcorner}\N]\mathsf{filler}}"', from=1-1, to=2-1]
	\arrow[""{name=1, anchor=center, inner sep=0}, "{\mathrm{transportFiller}}", from=1-2, to=2-2]
	\arrow["e"', dotted, from=2-1, to=2-2]
	\arrow["{\comp\,A}", from=3-1, to=3-2]
	\arrow["{\ulcorner\urcorner[]\,(\norm\,A)\,\gamma}"', from=3-1, to=4-1]
	\arrow["{\mathrm{fillerOf}\,(\comp\,(A[\gamma]))}"{description}, draw=none, from=3-1, to=4-2]
	\arrow["{\comp\,(A[\gamma])}", from=3-2, to=4-2]
	\arrow[equals, from=4-1, to=4-2]
	\arrow["{\mathrm{fillerOf}\,e}"{description, pos=0.6}, draw=none, from=0, to=1]
\end{tikzcd}
\]
\caption{Normalisation on the substitution law for $\Pi$ acts as
  follows: $\norm\,(\Pi[]\,A\,B\,\gamma) :\equiv \N\Pi\,\refl\,e$
  where $e$ is defined in the upper square in this diagram. The upper
  square is a dependent square over the lower
  one.}\label{fig:normpisub}
\end{figure*}
\begin{figure*}  
\adjustbox{minipage={\textwidth},scale=0.835,left}{
  \[
\begin{tikzcd}[ampersand replacement=\&]
	{\Pi\,\ulcorner(\norm\,A)\N[\gamma\N]\urcorner\ \ulcorner(((\comp\,A)_*\,(\norm\,B))\N[\gamma^{\ulcorner+\urcorner}\N]\urcorner} \& {\Pi\ \ulcorner\norm\,A\N[\gamma\N]\urcorner\ \ulcorner(\comp\,(A[\gamma]))_* (\norm\,B\N[\gamma^+\N])\urcorner} \\
	{\Pi\ (\ulcorner\norm\,A\urcorner[\gamma])\ \ulcorner((\comp\,A)_*\,(\norm\,B))\N[\gamma^+\N]\urcorner} \& {\Pi\ (A[\gamma])\ \ulcorner(\norm\,B)\N[\gamma^+\N]\urcorner} \\
	{\Pi\,(\ulcorner\norm\,A\urcorner[\gamma])\,(\ulcorner(\comp\,A)_*\,(\norm\,B)\urcorner[\gamma^+])} \\
	{(\Pi\ \ulcorner\norm\,A\urcorner\ \ulcorner(\comp\,A)_*\,(\norm\,B)\urcorner)[\gamma]} \& {\Pi\ (A[\gamma])\ (\ulcorner\norm\,B\urcorner[\gamma^+])} \\
	{(\Pi\ A\ \ulcorner\norm\,B\urcorner)[\gamma]} \\
	{(\Pi\ A\ B)[\gamma]} \& {\Pi\ (A[\gamma])\ (B[\gamma^+])}
	\arrow["e", from=1-1, to=1-2]
	\arrow["\begin{array}{c} \substack{\Pi\,(\ulcorner\urcorner[]\,(\norm\,A)\,\gamma)\\\ulcorner((\comp\,A)_*\,(\norm\,B))\N[\gamma^{\ulcorner+\urcorner}\N]\mathsf{filler}\urcorner} \end{array}"', from=1-1, to=2-1]
	\arrow["{\Pi\,(\mathrm{fillerOf}\,(\comp\,(A[\gamma])))\,(\mathrm{fillerOf}\,e)}"{description}, draw=none, from=1-1, to=2-2]
	\arrow["{\Pi\,(\comp\,(A[\gamma]))\,\mathrm{transportFiller}}", from=1-2, to=2-2]
	\arrow["{\Pi\,(\comp\,A)\,\mathrm{transportFiller}}"', from=2-1, to=2-2]
	\arrow[""{name=0, anchor=center, inner sep=0}, "{\ulcorner\urcorner[]\,((\comp\,A)_*\,(\norm\,B))\,\gamma^+}"', from=2-1, to=3-1]
	\arrow[""{name=1, anchor=center, inner sep=0}, "{\ulcorner\urcorner[]\,(\norm\,B)\,\gamma^+}", from=2-2, to=4-2]
	\arrow["{\Pi[]\,\ulcorner\norm\,A\urcorner\,\ulcorner(\comp\,A)_*\,(\norm\,B)\urcorner\,\gamma}"', from=3-1, to=4-1]
	\arrow["{\Pi\,(\comp\,A)\,\mathrm{transportFiller}}", from=3-1, to=4-2]
	\arrow["{\mathrm{nat}}"{description}, draw=none, from=4-1, to=4-2]
	\arrow["{\Pi\,(\comp\,A)\,\mathrm{transportFiller}}"', from=4-1, to=5-1]
	\arrow[""{name=2, anchor=center, inner sep=0}, "{\comp\,B}", from=4-2, to=6-2]
	\arrow["{\Pi[]\,A\,\ulcorner\norm\,B\urcorner\,\gamma}"', from=5-1, to=4-2]
	\arrow[""{name=3, anchor=center, inner sep=0}, "{\comp\,B}"', from=5-1, to=6-1]
	\arrow["{\Pi[]\,A\,B\,\gamma}"', from=6-1, to=6-2]
	\arrow["{\mathrm{nat}}"{description}, draw=none, from=0, to=1]
	\arrow["{\mathrm{nat}}"{description}, draw=none, from=3, to=2]
\end{tikzcd}
\]
}
  \caption{This diagram is the proof
    $\comp\,(\Pi[]\,A\,B\,\gamma)$. The line $e$ is defined in Figure
    \ref{fig:normpisub}.}\label{fig:comppisub}
\end{figure*}

\begin{figure*}
\adjustbox{minipage={\textwidth},scale=0.86,left}{
  \[\begin{tikzcd}
    {A[\gamma\circ\id]} \\
    \\
    \\
    \\
    & {A[\gamma\circ\id][\id]} \\
    \\
    \\
    \\
    && {A[(\gamma\circ\id)\circ\id]} \\
    \\
    && {A[\gamma\circ(\id\circ\id)]} \\
    &&& {A[\gamma\circ\id]} \\
    && {A[\gamma][\id\circ\id]} \\
    & {A[\gamma][\id][\id]} &&&& {A[\gamma][\id]} \\
    {A[\gamma][\id]} &&&&&&& {A[\gamma]}
    \arrow[""{name=0, anchor=center, inner sep=0}, "{[\circ]\,A\,\gamma\,\id}"', from=1-1, to=15-1]
    \arrow[""{name=1, anchor=center, inner sep=0}, "{\mathsf{idr}\,\gamma}", curve={height=-120pt}, from=1-1, to=15-8]
    \arrow["{[\id]\,(A[\gamma\circ\id))}"{description}, from=5-2, to=1-1]
    \arrow[""{name=2, anchor=center, inner sep=0}, "{[\circ]\,A\,\gamma\,\id}"', from=5-2, to=14-2]
    \arrow[""{name=3, anchor=center, inner sep=0}, "{\mathsf{idr}\,\gamma}", curve={height=-60pt}, from=5-2, to=14-6]
    \arrow["{[\circ]\,A\,(\gamma\circ\id)\,\id}"{description}, from=9-3, to=5-2]
    \arrow[""{name=4, anchor=center, inner sep=0}, "{\mathsf{idr}\,\gamma}", from=9-3, to=12-4]
    \arrow["{\mathsf{isSetSub}}"'{pos=0.4}, draw=none, from=9-3, to=12-4]
    \arrow["{\mathsf{ass}\,\gamma\,\id\,\id}", from=11-3, to=9-3]
    \arrow["{\mathsf{idl}\,\id}"', from=11-3, to=12-4]
    \arrow["{[\circ]\,A\,\gamma\,(\id\circ\id)}"', from=11-3, to=13-3]
    \arrow["{[\circ]\,A\,\gamma\,\id}"{description}, from=12-4, to=14-6]
    \arrow["{\mathrm{nat}}"{description}, draw=none, from=13-3, to=12-4]
    \arrow["{[\circ]\,(A[\gamma])\,\id\,\id}"{description}, from=13-3, to=14-2]
    \arrow["{\mathsf{idl}\,\id}", from=13-3, to=14-6]
    \arrow[""{name=5, anchor=center, inner sep=0}, "{[\id]\,(A[\gamma])}"', from=14-2, to=14-6]
    \arrow["{[\id]\,(A[\gamma][\id])}"{description}, from=14-2, to=15-1]
    \arrow["{[\id]\,(A[\gamma])}"{description}, from=14-6, to=15-8]
    \arrow[""{name=6, anchor=center, inner sep=0}, "{[\id]\,(A[\gamma])}"', from=15-1, to=15-8]
    \arrow["{\mathrm{nat}}"{description}, draw=none, from=2, to=0]
    \arrow["{\mathrm{nat}}"{description}, draw=none, from=3, to=1]
    \arrow["{\mathrm{nat}}"{description}, draw=none, from=4, to=3]
    \arrow["{[\mathsf{ass}]\,A\,\gamma\,\id\,\id}"{description, pos=0.4}, draw=none, from=11-3, to=2]
    \arrow["{[\mathsf{idl}]\,(A[\gamma])\,\id}"{description}, draw=none, from=13-3, to=5]
    \arrow["{\mathrm{nat}}"{description}, draw=none, from=5, to=6]
  \end{tikzcd}\]
  }
  \caption{Proof that $[\mathsf{idl}]$ implies $[\mathsf{idr}]$ in any
    groupoid CwF (Definition \ref{def:groupidCwF}).}\label{fig:idlidr}
\end{figure*}

\end{document}